\newtheorem{theorem}{Theorem}
\newtheorem{lemma}[theorem]{Lemma}
\newtheorem{corollary}[theorem]{Corollary}
\newtheorem{claim}[theorem]{Claim}
\newtheorem{proposition}[theorem]{Proposition}
\newtheorem{experiment}[theorem]{Random Experiment}
\newtheorem{remark}[theorem]{Remark}
\newenvironment{customthm}[1]
  {\innercustomthm}
  {\endinnercustomthm}
\newenvironment{customcla}[1]
  {\innercustomcla}
  {\endinnercustomcla}
\newcommand*{\email}[1]{%
    \href{mailto:#1}{#1}\par
    }
\newcommand{\OPT}{\mathrm{OPT}}
\newcommand{\E}{\mathbb{E}}
\renewcommand{\P}{\mathbb{P}}
\newcommand{\C}{\mathcal{C}}
\newcommand{\K}{\mathcal{K}}
\newcommand{\I}{R}
\newcommand{\F}{\mathcal{F}}
\newcommand{\N}{\mathcal{N}}
\title{The Combinatorial Santa Claus Problem or:\\ How to Find Good Matchings in Non-Uniform Hypergraphs\footnote{This research was supported by the Swiss National Science Foundation project
200021-184656 “Randomness in Problem Instances and Randomized Algorithms.”}}
\author{Etienne Bamas\footnote{EPFL, Switzerland, \email{etienne.bamas@epfl.ch}} \and Paritosh Garg\footnote{EPFL, Switzerland, \email{paritosh.garg@epfl.ch}} \and Lars Rohwedder\footnote{EPFL, Switzerland, \email{lars.rohwedder@epfl.ch}}}
\date{}
\begin{document}

\maketitle

\begin{abstract}
    We consider hypergraphs on vertices $P\cup R$ where each hyperedge contains exactly one vertex in $P$. Our goal is to select a matching
    that covers all of $P$, but we allow each selected
    hyperedge to drop all but an $(1/\alpha)$-fraction of its intersection with $R$ (thus relaxing the matching constraint). Here $\alpha$ is to be minimized.
    We dub this problem the Combinatorial Santa Claus problem, since we show in this paper that this problem and the Santa Claus problem are almost equivalent in terms of their approximability. 
    
    The non-trivial observation that any uniform regular hypergraph
    admits a relaxed matching for $\alpha = O(1)$
    was a major step in obtaining a constant approximation rate for a special case of the
    Santa Claus problem, which received great attention in literature.
    It is natural to ask if the
    uniformity condition can be omitted.
    Our main result is that every (non-uniform) regular hypergraph admits a relaxed matching for $\alpha = O(\log\log(|R|))$,
    when all hyperedges are sufficiently large (a condition that is necessary).
    In particular, this implies an $O(\log\log(|R|))$-approximation algorithm
    for the Combinatorial Santa Claus problem with large hyperedges.
\end{abstract}
\pagebreak

\section{Introduction}
Let $\mathcal H = (P\cup R, \mathcal C)$ be a hypergraph with
hyperedges $\mathcal C$ over the vertices $P$ and $R$ with
$m = |P|$ and $n = |R|$.
Each hyperedge $C\in\mathcal C$ contains exactly one vertex in $P$,
that is, $|C\cap P| = 1$.
The hypergraph may contain multiple copies of the same hyperedge.
We are looking for a matching that covers all vertices in $P$.
As one of Karp's $21$ NP-complete problems~\cite{DBLP:conf/coco/Karp72},
this problem is already NP-hard for $3$-uniform hypergraphs.
Hence, we consider a relaxation of the following kind.
Each selected hyperedge is allowed to drop all but a $(1/\alpha)$-fraction of its
intersection with $R$.
More formally, we seek to find a set $\mathcal K$ with
$P\subseteq \bigcup \mathcal K$
where for each $K\in\mathcal K$ there is
an edge $C\in\mathcal C$ with $K\subseteq C$
and $|K|\ge (1/\alpha) |C|$. We call such a solution an $\alpha$-relaxed perfect matching.
One seeks to minimize $\alpha$.

This problem is a special case of the Santa Claus
problem, in which we distribute resources to players,
each player $i$ has a valuation $v_{ij}$ for each resource $j$, and we want to maximize the value for the least happy player
(the sum of his valuations over resources assigned to him).
The best approximation algorithm known for this problem has an approximation rate of
$n^{\epsilon}$ and runs in time $n^{O(1/\epsilon)}$~\cite{DBLP:conf/focs/ChakrabartyCK09} for $\epsilon\in\Omega(\log\log(n)/\log(n))$.
On the negative side,
it is only known
that computing a $(2 - \delta)$-approximation is NP-hard.
We elaborate later a straight-forward reduction of the hypergraph problem
to the Santa Claus problem preserving the approximation rate and a more involved
reduction that turns a $c$-approximation rate
for the hypergraph problem into a
$O((c\log^*(n))^2)$-approximation rate for the 
Santa Claus problem.
In view of this connection, we dub the hypergraph problem the \emph{Combinatorial Santa Claus} problem and
refer to $P$ as the players, $R$ as the resources, and $\mathcal C$
as the configurations.

This problem was studied previously for uniform
(each edge has the same size) and regular
(each vertex in $P\cup R$ has the same degree) hypergraphs.
\citet{BansalSrividenko} have shown that any hypergraph with both properties admits
an $\alpha$-relaxed perfect matching for $\alpha = O(\log\log(n))$\footnote{In fact, they get $\alpha = O(\log\log(m) / \log\log\log(m))$ by a slightly more careful analysis.}.
This was also motivated by the Santa Claus problem.
Namely, it implies an $O(\alpha)$ integrality
gap for an LP relaxation of the restricted Santa Claus problem, which is the special case where
for all valuations $v_{ij}\in\{0,v_j\}$
($v_j$ being a player independent value).
This was subsequently improved by \citet{Feige} to $\alpha = O(1)$. Further improvements on the constant were
obtained since then (see end of the section).

It is natural to ask, whether regularity is
also a sufficient condition for the existence of
an $\alpha$-relaxed perfect matching
in the case of non-uniform hypergraphs
(for some small $\alpha$).
The simple answer is that this is not the case.
Figure~\ref{fig:counter-example} (derived from a similar construction in~\cite{BansalSrividenko}) shows that there
are regular hypergraphs that do not contain an $\alpha$-relaxed perfect matching for $\alpha \approx \sqrt{n}$.
However, a crucial aspect of the counter-example
is that some hyperedges contain just one element of $R$.
Those hyperedges cannot drop any elements in an $\alpha$-relaxed matching.
Our main result is that for $\alpha = O(\log\log(n))$
a regular (non-uniform)
hypergraph where all hyperedges
contain at least $\alpha$ elements admits
a $\alpha$-relaxed perfect matching.
We note that the setting with the assumption that
hyperedges are large still contains the uniform case without the assumption
(and therefore is much more general):
This holds because the case where all hyperedges
are smaller than $\alpha$ trivially admits an
$\alpha$-relaxed perfect matching by Hall's condition.
In particular, our result directly implies a bound comparable to Bansal and Srividenko
in the uniform case.

Our result for regular hypergraphs implies that we can approximate the Combinatorial Santa Claus problem on
non-regular, non-uniform hypergraphs, when all edges are sufficiently large.
More precisely, we can either determine that there is no $\delta$-relaxed perfect matching
or compute a $(\delta / \alpha)$-relaxed perfect matching for $\alpha = O(\log\log(n))$,
when all hyperedges are of size at least $\alpha / \delta$.

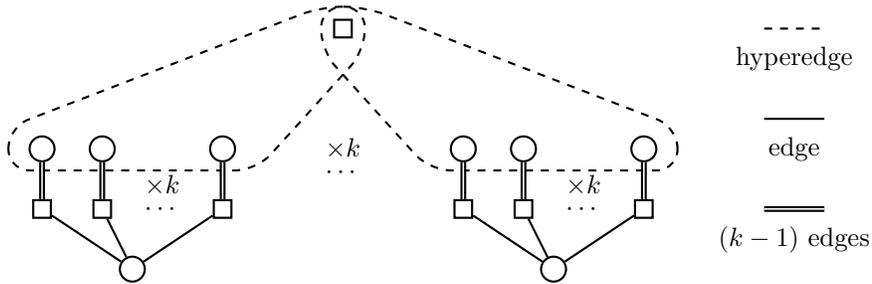
\begin{figure}
    \centering
    \begin{tikzpicture}[scale=0.8]
    \node[draw, thick, rectangle] (p0) at (5, 5) {};
    
    \node[draw, thick, rectangle] (p1) at (0, 2) {};
    \node[draw, thick, circle] (r1) at (0, 3) {};
    \node[draw, thick, rectangle] (p2) at (1, 2) {};
    \node[draw, thick, circle] (r2) at (1, 3) {};
    \node at (2, 2) {$\cdots$};
    \node at (2, 2.4) {$\times k$};
    \node[draw, thick, rectangle] (p3) at (3, 2) {};
    \node[draw, thick, circle] (r3) at (3, 3) {};
    \node[draw, thick, circle] (r4) at (1.5, 1) {};
    \node (h1) at (0.5, 1) {};
    \node (h2) at (2.5, 1) {};
    
    \draw[thick, dashed, rounded corners=8pt]
    ($(r1.south west)+(-0.4,-0.2)$) -- 
    ($(r1.north west)+(-0.4,0.2)$) -- 
    ($(p0.north west)+(-0.2,0.2)$) -- 
    ($(p0.north east)+(0.2,0.2)$) -- 
    ($(p0.south east)+(0.2,-0.2)$) -- 
    ($(r3.south east)+(0.4,-0.2)$) --cycle;
    
    \draw[thick, double] (r1) -- (p1);
    \draw[thick, double] (r2) -- (p2);
    \draw[thick, double] (r3) -- (p3);
    \draw[thick] (r4) -- (p1);
    \draw[thick] (r4) -- (p2);
    \draw[thick] (r4) -- (p3);
    
    \node at (5, 3) {$\times k$};
    \node at (5, 2.6) {$\cdots$};
    
    \node[draw, thick, rectangle] (p1') at (7, 2) {};
    \node[draw, thick, circle] (r1') at (7, 3) {};
    \node[draw, thick, rectangle] (p2') at (8, 2) {};
    \node[draw, thick, circle] (r2') at (8, 3) {};
    \node at (9, 2) {$\cdots$};
    \node at (9, 2.4) {$\times k$};
    \node[draw, thick, rectangle] (p3') at (10, 2) {};
    \node[draw, thick, circle] (r3') at (10, 3) {};
    \node[draw, thick, circle] (r4') at (8.5, 1) {};
    \node (h1') at (7.5, 1) {};
    \node (h2') at (9.5, 1) {};

    \draw[thick, dashed, rounded corners=8pt]
    ($(r3'.south east)+(0.4,-0.2)$) -- 
    ($(r3'.north east)+(0.4,0.2)$) -- 
    ($(p0.north east)+(0.2,0.2)$) -- 
    ($(p0.north west)+(-0.2,0.2)$) -- 
    ($(p0.south west)+(-0.2,-0.2)$) -- 
    ($(r1'.south west)+(-0.4,-0.2)$) --cycle;
    
    \draw[thick, double] (r1') -- (p1');
    \draw[thick, double] (r2') -- (p2');
    \draw[thick, double] (r3') -- (p3');
    \draw[thick] (r4') -- (p1');
    \draw[thick] (r4') -- (p2');
    \draw[thick] (r4') -- (p3');
    
    \draw[thick, dashed] (12, 5) -- (13, 5);
    \node at (12.5, 4.5) {hyperedge};
    
    \draw[thick] (12, 3.5) -- (13, 3.5);
    \node at (12.5, 3) {edge};
    
    \draw[thick, double] (12, 2) -- (13, 2);
    \node at (12.5, 1.5) {$(k - 1)$ edges};
    
    \end{tikzpicture}
    \caption{A $k$-regular hypergraph with $n = k(k + 1)$
    and no $\alpha$-relaxed perfect matching for any $\alpha < k$.}
    \label{fig:counter-example}
\end{figure}

\paragraph*{Overview of techniques.}
We note that for $\alpha = O(\log(n))$ the statement is
easy to prove: We select for each player $i$
one of the configurations containing $i$ uniformly
at random. Then by standard concentration bounds each
resource is contained in at most $O(\log(n))$
of the selected configurations with high probability.
This implies that there is a fractional assignment of resources to
configurations, such that each of the
selected configurations $C$ receives
$\lfloor |C| / O(\log(n)) \rfloor$ of the resources
in $C$. By integrality of the bipartite matching polytope, there
is also an integral assignment with this property.

We now briefly describe the approach of Bansal and
Srividenko to improve this to $O(\log\log(n))$ in
the uniform case.
Let $k$ be the size of each configuration.
First they reduce the degree
of each player and resource to $O(\log(n))$ using
the argument above, but taking $O(\log(n))$ configurations for each player.
Then they observe that when sampling
uniformly at random $O(n \log(n) / k)$ resources
and dropping all others, an
$\alpha$-relaxed perfect matching with respect to
the smaller set of resources is still an
$O(\alpha)$-relaxed perfect matching with respect
to all resources with high probability (when assigning the dropped resources
to the selected configurations appropriately).
Indeed, the smaller instance is easier to solve:
With high probability all configurations have size $O(\log(n))$ and this greatly reduces the dependencies
between the bad events of the random experiment above
(the event that a resource is contained
in too many selected configurations).
This allows them to apply Lov\'asz Local Lemma (LLL)
in order to show that with positive probability the
experiment succeeds for $\alpha = O(\log\log(n))$.
Feige's improvement uses a more involved variant of
these arguments allowing him to iterate the approach
and repeatedly reduce the degree of of the players
and the size of the configurations.

It is not obvious how to extend this approach
to non-uniform hypergraphs:
Sampling a fixed fraction of the
resources will either make the small configurations
empty---which makes it impossible to retain guarantees for the original instance---or it leaves the big configurations big%
---which fails to reduce the dependencies enough to
apply LLL. Hence it requires new sophisticated ideas
for non-uniform hypergraphs, which we describe next.

Suppose we are able to find a set $\mathcal K\subseteq \mathcal C$ of configurations (one for each player)
such that for each $K\in\mathcal K$ the sum of intersections $|K\cap K'|$ with smaller configurations $K'\in \mathcal K$ is
very small, say at most $|K| / 2$.
Then it is easy to derive a $2$-relaxed perfect matching: We iterate over all $K\in\mathcal K$ from large to small and reassign all resources to $K$
(possibly stealing them from the configuration that previously had them).
In this process every configuration gets stolen at most $|K| / 2$ of its resources,
in particular, it keeps the other half.
Indeed, it is non-trivial to obtain a property like the one mentioned above. If we take a random configuration for each player, the
dependencies of the intersections are too complex.
To avoid this we invoke an advanced variant of the sampling approach where we construct not only one
set of resources, but a hierarchy of
resource sets $\I_0\supseteq \cdots \supseteq \I_d$ by repeatedly dropping a fraction of resources from the previous set.
We then formulate bad events based on the intersections
of a configuration $C$ with smaller configurations $C'$,
but we write it only considering a resource set $\I_k$ of convenient granularity
(chosen based on the size of $C'$).
In this way we formulate a number of bad events using
various sets $\I_k$. This succeeds in reducing
the dependencies enough to apply LLL.
Then we derive from the properties we obtained
an $\alpha$-relaxed perfect matching, which is
again significantly more involved than the corresponding
argument in the uniform case.

\paragraph*{Other related work.}
For the uniform case the best constant achieved is due to a local search
procedure~\cite{DBLP:journals/talg/AsadpourFS12, DBLP:journals/talg/AnnamalaiKS17, DBLP:conf/soda/DaviesRZ20, DBLP:conf/icalp/ChengM19} that was discovered after the LLL method by~\cite{BansalSrividenko,Feige}.
This procedure is a generalization of the classical Hungarian method for maximum
matching in a bipartite graph~\cite{kuhn1955hungarian}.
It repeatedly augments a given matching
swapping hyperedges of a current matching
for others, eventually increasing the size of the matching.
The approach appears to be less suited for the non-uniform case.
This is because its analysis (which is closely coupled to the design of the algorithm)
heavily relies on amortizing resources of hyperedges
with the resources of other hyperedges and then applying volume arguments.
In the non-uniform case the cardinalities of the resources of two hyperedges
are not related, which makes it hopeless to amortize them.

\section{Matchings in regular hypergraphs}
In this section we give a proof of our
main result, namely that every regular hypergraph with hyperedges of size at least $\alpha$ has
an $\alpha$-relaxed perfect matching for
$\alpha = O(\log\log(n))$.
We start by defining our notation.
Let $\ell \in \mathbb N$.
We assume that each player $i\in P$ has $\ell$ configurations $\C_i$ which he intersects with.
Moreover, each resource $j\in R$ appears in at most $\ell$ configurations.
Note that this is a slightly more general condition than
regularity (requiring that each vertex
has the same degree of exactly $\ell$).
We can assume like hinted in the introduction that $\ell =300.000\log^{3}(n)$ at a constant loss
(see Appendix \ref{appendix_main} for details).

We now group the configurations in $\C_i$ by size:
We denote by $\C_i^{(0)}$ the configurations
of size in $[1,\ell^{4})$ and
for $k\ge 1$ we write $\C_i^{(k)}$ for the configurations
of size in $[\ell^{k+3},\ell^{k+4})$.
Moreover, define
$\C^{(k)}=\bigcup_i \C_i^{(k)}$
and $\C^{(\ge k)} = \bigcup_{h\ge k} \C^{(h)}$.
Let $d$ be the
smallest number such that
$\C^{(\ge d)}$ is empty. Note that
$d\le \log(n) / \log(\ell)$.
Now consider the following random process.
\begin{experiment}\label{exp:sequence}
We construct a nested sequence of resource sets $\I=\I_0 \supseteq \I_1 \supseteq  \ldots \supseteq  \I_d$ as follows. Each $\I_k$ is obtained from $\I_{k-1}$ by deleting every resource in $\I_{k-1}$ independently with probability $(\ell-1) / \ell$.
\end{experiment}
In expectation only a $1/\ell$ fraction of resources in $\I_{k-1}$ survives in $\I_k$.
Also notice that for $C \in \C^{(k)}$ we
have that $\mathbb E[ |\I_k \cap C| ] = \mathrm{poly}(\ell)$.

We proceed as follows.
In Section~\ref{sec:sequence}, we give some properties
of the resource sets constructed by
Random Experiment~\ref{exp:sequence} that hold
with high probability.
Then in Section~\ref{sec:LLL}, we show that we can
find a single configuration for each player such
that the intersection with smaller selected configurations
is bounded if we restrict the resource set to an appropriate
$\I_k$. Restricting the resource set is important to bound
the dependencies of bad events in order to apply Lovasz
Local Lemma.
Finally in Section~\ref{sec:reconstruction},
we demonstrate that these configurations also give
an $\alpha$-relaxed perfect matching for an
appropriate assignment of resources to configurations.

\subsection{Properties of resource sets}\label{sec:sequence}
In this subsection, we give a precise statement of the key properties that we need from Random Experiment~\ref{exp:sequence}. The first two lemmas have a straight-forward proof. The last one is a generalization of an argument used by \citet{BansalSrividenko}. Since the proof is more technical and tedious, we also defer it to Appendix~\ref{appendix_sequence} along with the proof of the first two statements.

We start with the first property which bounds the size of the 
configurations when restricted to some $\I_k$.
\begin{lemma}
\label{lma-size}
Consider Random Experiment~\ref{exp:sequence}
with $\ell\geq 300.000\log^{3} (n)$.
For any $k\geq 0$ and any $C\in\C^{(\geq k)}$ we have 
  \begin{equation*}
      \frac{1}{2} \ell^{-k}|C| \le |\I_k \cap C| \le \frac{3}{2} \ell^{-k}|C|
  \end{equation*}
with probability at least $1-1/n^{10}$.
\end{lemma}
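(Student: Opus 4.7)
The plan is to reduce the statement to a standard Chernoff concentration bound. For each resource $r \in R$ and each $k \geq 1$, let $X_r^{(k)}$ be the indicator that $r$ survives all $k$ rounds of Random Experiment~\ref{exp:sequence}. Since in each round a resource is deleted independently with probability $(\ell-1)/\ell$, the variables $\{X_r^{(k)}\}_{r \in R}$ are mutually independent Bernoullis with success probability $\ell^{-k}$. For a fixed configuration $C \in \C^{(\geq k)}$, we can therefore write $|\I_k \cap C| = \sum_{r \in C} X_r^{(k)}$, a sum of $|C|$ independent Bernoullis with mean $\mu := \ell^{-k}|C|$.

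I would first dispose of the trivial case $k = 0$, where $\I_0 = R$ and hence $|\I_0 \cap C| = |C|$ deterministically, so both inequalities hold (with equality in fact). For $k \geq 1$ and $C \in \C^{(\geq k)}$, the size grouping gives $|C| \geq \ell^{k+3}$, so $\mu \geq \ell^3$. A standard multiplicative Chernoff bound with deviation parameter $\delta = 1/2$ yields
\begin{equation*}
\Pr\bigl[\, \bigl|\,|\I_k \cap C| - \mu\,\bigr| \geq \tfrac{1}{2}\mu \,\bigr] \;\leq\; 2\exp(-\mu/12) \;\leq\; 2\exp(-\ell^3/12).
\end{equation*}
With $\ell \geq 300{,}000 \log^3(n)$ the right-hand side is super-polynomially small in $n$, so a union bound over all pairs $(k, C)$ with $k \in \{0, \dots, d\}$ and $C \in \C$ (a number polynomial in $n$, since $|\C| \leq m\ell$ and $d \leq \log(n)/\log(\ell)$) still leaves us with a total failure probability well below $1/n^{10}$.

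There is essentially no serious obstacle here: the only care needed is choosing $\ell$ large enough in terms of $\log(n)$ that the Chernoff tail survives the union bound with the claimed $1/n^{10}$ slack, which is exactly why the paper sets $\ell = 300{,}000 \log^3(n)$. A secondary but routine bookkeeping task is to confirm that the resources' survival events across rounds really do give an independent Bernoulli product structure, which follows immediately from the round-by-round construction of the nested sequence $\I_0 \supseteq \I_1 \supseteq \cdots \supseteq \I_d$.
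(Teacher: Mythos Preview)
Your proof is correct and essentially identical to the paper's: both handle $k=0$ trivially, then for $k\ge 1$ use the independence of the survival indicators, the lower bound $\mu=\ell^{-k}|C|\ge \ell^3$ coming from $|C|\ge \ell^{k+3}$, and a multiplicative Chernoff bound with $\delta=1/2$ to get failure probability at most $2\exp(-\ell^3/12)\le 1/n^{10}$. The union bound you add at the end is not needed for the lemma as stated (which is for fixed $k$ and $C$), but it does no harm and is exactly what the paper does afterward when combining all three lemmas.
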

The next property expresses that for any configuration the sum of intersections with configurations of a particular size does not deviate much from its expectation. 
\begin{lemma}
\label{lma-overlap-representative}
Consider Random Experiment~\ref{exp:sequence}
with $\ell\geq 300.000\log^{3} (n)$.
For any $k\geq 0$ and any $C\in\C^{(\geq k)}$ we have 
  \begin{equation*}
      \sum_{C'\in \C^{(k)}} |C'\cap C\cap \I_k| \leq \frac{10}{\ell^{k}} \left(|C|+\sum_{C'\in \C^{(k)}} |C'\cap C| \right)
  \end{equation*}
with probability at least $1-1/n^{10}$.
\end{lemma}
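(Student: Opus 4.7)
The plan is to rewrite the sum as a weighted sum of independent Bernoulli variables indexed by resources and then apply a multiplicative Chernoff bound. The case $k = 0$ is trivial since $\I_0 = R$ makes the quantity deterministic, so I would focus on $k \ge 1$. Under Random Experiment~\ref{exp:sequence}, each resource $j \in R$ lies in $\I_k$ independently with probability $p := \ell^{-k}$. Setting $w_j := |\{C' \in \C^{(k)} : j \in C'\}|$, the degree bound on resources gives $w_j \le \ell$, and swapping the order of summation yields
\[
S \;:=\; \sum_{C'\in\C^{(k)}} |C' \cap C \cap \I_k| \;=\; \sum_{j\in C} \mathbf{1}[j\in\I_k]\, w_j,
\]
a sum of independent random variables valued in $[0,\ell]$ with expectation $\mu := p\sum_{C'\in\C^{(k)}} |C'\cap C|$. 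The goal is then to show $\P[S > 10p|C| + 10\mu] \le 1/n^{10}$.

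Next I would rescale to $Y_j := \mathbf{1}[j\in\I_k]\, w_j/\ell \in [0,1]$ and apply the standard multiplicative Chernoff bound for sums of independent $[0,1]$-valued random variables. Writing $t := 10p|C| + 10\mu = (1+\delta)\mu$, the construction of $t$ forces $\delta \ge 9$ in every regime, so Chernoff delivers $\P[S \ge t] \le \exp(-\delta\mu/(3\ell))$. Because $\delta\mu = t - \mu = 10p|C| + 9\mu \ge 10p|C|$, the exponent is at least $10p|C|/(3\ell)$. Since $C \in \C^{(\ge k)}$ forces $|C| \ge \ell^{k+3}$, this exponent is $\Omega(\ell^2)$, and the standing assumption $\ell \ge 300{,}000\log^3 n$ makes $\exp(-\Omega(\ell^2))$ vastly smaller than $1/n^{10}$.

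The main subtlety I anticipate is the two-term shape $10(p|C| + \mu)$ of the right-hand side. A purely multiplicative deviation from $\mu$ would be vacuous when $\mu$ is tiny or even zero, while an additive threshold of $10p|C|$ alone would fail when $\mu \gg p|C|$. The stated form interpolates between these regimes and keeps the Chernoff ratio $\delta$ uniformly bounded below by $9$, and it is only the combination of the lower bound $|C| \ge \ell^{k+3}$ with the polylogarithmic choice of $\ell$ that finally drives the Chernoff exponent past $10\log n$ and secures the stated failure probability.
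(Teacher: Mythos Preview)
Your proposal is correct and follows essentially the same argument as the paper's proof: both write $X$ as a sum of independent $[0,\ell]$-valued variables indexed by resources (using the degree bound), note $\E[X]=\ell^{-k}\sum_{C'}|C'\cap C|$, and apply the multiplicative Chernoff bound with a deviation parameter $\delta\ge 9$ so that $\delta\mu/(3\ell)\ge 10|C|/(3\ell^{k+1})\ge 10\ell^2/3$, which together with $\ell\ge 300{,}000\log^3 n$ yields failure probability at most $1/n^{10}$. Your version is slightly more explicit (separating $k=0$, spelling out the weights $w_j$, and commenting on why the two-term threshold is needed), but the substance is identical.
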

We now define the notion of \emph{good} solutions which is helpful in stating our last property.
Let $\F$ be a set of configurations,
$\alpha:\F \rightarrow \mathbb N$, $\gamma \in\mathbb N$, and $\I'\subseteq \I$.
We say that an assignment of $\I'$ to $\F$ is $(\alpha,\gamma)$-good if every configuration $C\in \F$ receives at least $\alpha(C)$ resources of $C\cap \I'$ and if no resource in $\I'$ is assigned more than $\gamma$ times in total.

Below we obtain that given a $(\alpha,\gamma)$-good solution with respect to resource set $\I_{k+1}$, one can construct an almost $(\ell \cdot \alpha,\gamma)$-good solution with respect to the bigger resource set $\I_{k}$. 
\begin{lemma}
\label{lma-good-solution}
Consider Random Experiment~\ref{exp:sequence}
with $\ell\geq 300.000\log^{3} (n)$.
Assume that the bounds in Lemma~\ref{lma-size} hold
for some $k\ge 0$.
Then with probability at least $1 - 1/n^{10}$
the following holds for all
$\F\subseteq \C^{(\geq k+1)}$, $\alpha:\F \rightarrow \mathbb N$, and $\gamma \in\mathbb N$ such that $\ell^3/1000\leq \alpha(C) \leq n $ for all $C\in\F$ and
$\gamma \in \{1,\dotsc,\ell\}$:
If there is a $(\alpha,\gamma)$-good assignment of $\I_{k+1}$ to $\F$, then there is a $(\alpha',\gamma)$-good assignment of $\I_k$ to $\F$ where
\begin{equation*}
    \alpha'(C) \ge \ell \left(1-\frac{1}{\log (n)} \right) \alpha(C)
\end{equation*}
for all $C\in\F$.
Moreover, this assignment can be found in polynomial time.
\end{lemma}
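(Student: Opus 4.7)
The plan is to reformulate the existence of the desired $(\alpha',\gamma)$-good assignment via Hall's condition for bipartite $b$-matchings. By integrality of the bipartite $b$-matching polytope, such an assignment of $\I_k$ to $\F$ exists if and only if, for every $\F' \subseteq \F$,
$$
\sum_{C \in \F'} \alpha'(C) \;\le\; \gamma \, |T_{\F'} \cap \I_k|, \qquad T_{\F'} := \bigcup_{C \in \F'} C,
$$
and likewise the assumed $(\alpha,\gamma)$-good assignment on $\I_{k+1}$ is equivalent to the analogous inequality with $\alpha$ and $\I_{k+1}$. Once Hall's condition is verified, the assignment itself is produced in polynomial time by a standard bipartite $b$-matching / max-flow algorithm, so the task reduces to showing Hall's condition for $\I_k$ simultaneously for all $\F'\subseteq\F$ with probability at least $1-1/n^{10}$.

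Fix $\F' \subseteq \C^{(\geq k+1)}$ and suppose the new condition fails for $\F'$ while the old one still holds. Chaining the two inequalities yields
$$
|T_{\F'} \cap \I_{k+1}| \;>\; \frac{|T_{\F'} \cap \I_k|}{\ell(1 - 1/\log n)},
$$
and plugging $\alpha(C) \ge \ell^3/1000$ and $\gamma \le \ell$ into the $\I_{k+1}$-Hall inequality forces $|T_{\F'} \cap \I_{k+1}| \ge |\F'|\ell^2/1000$, and hence $|T_{\F'} \cap \I_k| \ge |\F'|\ell^2/1000$ as well. Since $T_{\F'}$ is deterministic (it depends only on the hypergraph and $\F'$), conditional on $\I_k$ the set $T_{\F'} \cap \I_{k+1}$ is obtained from $T_{\F'} \cap \I_k$ by retaining each element independently with probability $1/\ell$. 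A multiplicative Chernoff bound at relative deviation $\Theta(1/\log n)$ thus bounds the probability of this bad event for $\F'$ by
$$
\exp\!\left(-\Theta\!\left(\frac{|T_{\F'} \cap \I_k|}{\ell \log^2 n}\right)\right)
\;\le\; \exp\!\left(-\Theta\!\left(\frac{|\F'|\,\ell}{\log^2 n}\right)\right)
\;\le\; n^{-c|\F'|}
$$
for any fixed constant $c$, using $\ell \ge 300.000\log^3(n)$.

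For the union bound a double counting $\sum_{C \in \C^{(\geq k+1)}} |C| \le n\ell$ together with $|C| \ge \ell^{k+4}$ on this family gives $|\C^{(\geq k+1)}| \le n/\ell^{k+3} \le n$, so the number of subfamilies of size $s$ is at most $n^s$, and the total failure probability is at most $\sum_{s \ge 1} n^s \cdot n^{-cs} \le 1/n^{10}$ for $c$ large enough. The main obstacle is that Hall's condition quantifies over exponentially many subfamilies, and for small $|\F'|$ the set $T_{\F'} \cap \I_k$ need not by itself be large enough for the Chernoff tail to outweigh the $n^s$ enumeration; this is circumvented by the observation above, namely that the floor $\alpha(C) \ge \ell^3/1000$ forces $|T_{\F'} \cap \I_k|$ on every bad event to grow linearly in $|\F'|$, so that the Chernoff exponent gains $\Omega(\log n)$ per additional element of $\F'$ and the union bound closes.
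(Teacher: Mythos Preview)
Your proposal contains a genuine gap: the Hall-type condition you state is \emph{not sufficient} for the existence of an $(\alpha',\gamma)$-good assignment. In the underlying flow network, the arc from a configuration $C$ to a resource $r\in C$ has capacity~$1$ (a resource can be assigned to a given configuration at most once), so the correct defect form of Hall's theorem reads
\[
\sum_{C\in\F'} \alpha'(C) \;\le\; \sum_{r\in T_{\F'}\cap \I_k}\min\bigl(d_{\F'}(r),\gamma\bigr),
\qquad d_{\F'}(r):=\bigl|\{C\in\F':r\in C\}\bigr|,
\]
and not $\sum_{C\in\F'} \alpha'(C)\le \gamma\,|T_{\F'}\cap\I_k|$. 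Your inequality is merely the necessary direction. For a trivial counterexample take $\F'=\{C\}$ with $|C\cap\I_k|=t$, $\gamma=\ell$ and $\alpha'(C)=2t$: your condition gives $2t\le \ell t$, yet $C$ obviously cannot receive $2t$ distinct resources from a set of size $t$. In the regime of the lemma one does \emph{not} have $\alpha'(C)\le |C\cap\I_k|$ a priori (Lemma~\ref{lma-size} only gives $|C\cap\I_{k+1}|\le \tfrac32\ell^{-1}|C\cap\I_k|$, so $\alpha'(C)\le \ell\,|C\cap\I_{k+1}|$ may exceed $|C\cap\I_k|$ by a factor up to~$3/2$), so the gap is not vacuous.

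The paper avoids this by working with the full min-cut of the flow network $\N(\F',\I_k,\ell\alpha,\gamma)$: a min cut is parameterised by a pair $(\C',\I')$ rather than just $\F'$, and its value $\sum_{C\notin\C'}\ell\alpha(C)+e(\C',\I_k\setminus\I')+\gamma|\I'|$ contains the edge term $e(\C',\I_k\setminus\I')$ that your formulation drops. The concentration step is then applied to the random quantity $e(\C',\tilde\I'')+\gamma|\tilde\I'|$, which is a sum of independent $[0,\ell]$-bounded variables, and the same $\ell^3/1000$ floor on $\alpha(C)$ yields the lower bound $c(S)\ge |\F'|\ell^3/1000$ needed for the Chernoff tail to beat the enumeration.

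Your concentration and union-bound machinery is otherwise sound and would go through essentially unchanged if you replaced $\gamma|T_{\F'}\cap\I_k|$ by the correct weighted sum $\sum_{r\in T_{\F'}\cap\I_k}\min(d_{\F'}(r),\gamma)$: the weights lie in $[1,\ell]$, the old Hall inequality still forces this sum (over $\I_{k+1}$, hence also over $\I_k$) to be at least $|\F'|\ell^3/1000$, and the Chernoff bound with range $\ell$ gives the same $\exp(-\Theta(|\F'|\ell/\log^2 n))$ tail. So the fix is local, but as written the ``if and only if'' you invoke is false and the sufficiency direction---which is precisely what you need---fails.
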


Given the lemmata above, by a simple union bound one gets that all the properties of resource sets hold.

\subsection{Selection of configurations}\label{sec:LLL}
In this subsection, we give a random process that selects one configuration
for each player such that the intersection with smaller configurations is bounded when
considering on appropriate sets $R_k$.
For clarity, we state in the following lemma what the properties of the sets
$R_0,\dotsc,R_d$ that we need are. These hold with high probability by the lemmata of
the previous section.

\begin{lemma}\label{lma:main-LLL}
  Let $\I =\I_0\supseteq\dotsc\supseteq\I_d$ be
  sets of fewer and fewer resources.
  Assume that for each $k$ and $C\in \mathcal C_i^{(k)}$ 
  we have
  \begin{equation*}
      1/2 \cdot \ell^{k - h} \le |C\cap \I_h| \le 3/2 \cdot \ell^{- h} |C| < 3/2 \cdot \ell^{k - h + 4}
  \end{equation*}
  for all $h=0,\dotsc,k$.
  Then there are $\mathcal K_i^{(k)}\subseteq \mathcal C_i^{(k)}$ such that
  $|\mathcal K_i| = 1$ and for each $k=0,\dotsc,d$, $j=0,\dotsc,k$ and
  $C\in\mathcal C^{(k)}$ we have
  \begin{equation*}
      \sum_{j\leq h\le k} \sum_{K\in\mathcal K^{(h)}} \ell^{h} |K \cap C \cap \I_h|
      \le \frac{1}{\ell} \sum_{j\leq h\le k} \sum_{C'\in\mathcal C^{(h)}} \ell^{h} |C' \cap C \cap \I_h| + 1000 \frac{d + \ell}{\ell}\log(\ell) |C| .
  \end{equation*}
\end{lemma}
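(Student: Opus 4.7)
The plan is to pick one configuration per player uniformly at random and then invoke the Lov\'asz Local Lemma to show that the desired inequalities hold simultaneously. Concretely, for each $i \in P$ I would draw independently a uniform $K_i \in \mathcal{C}_i$ and set $\mathcal{K}_i = \{K_i\}$, $\mathcal{K}_i^{(k)} = \mathcal{K}_i \cap \mathcal{C}_i^{(k)}$. For every pair $(C, j)$ with $C \in \mathcal{C}^{(k)}$ and $0 \leq j \leq k$, let $B_{C,j}$ be the event that the displayed inequality fails; the goal is $\P[\bigcap_{C,j} \overline{B_{C,j}}] > 0$.

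The central observation is that $X_{C,j} := \sum_{j \leq h \leq k} \sum_{K \in \mathcal{K}^{(h)}} \ell^h |K \cap C \cap R_h|$ splits as a sum $\sum_{i \in P} Y_i$ of independent player contributions, and a direct calculation gives $\mathbb{E}[X_{C,j}] = \frac{1}{\ell} \sum_{j \leq h \leq k} \sum_{C' \in \mathcal{C}^{(h)}} \ell^h |C' \cap C \cap R_h|$, which is exactly the multiplicative term on the right-hand side. Thus $B_{C,j}$ is the deviation event $X_{C,j} - \mathbb{E}[X_{C,j}] \geq t$ with $t := 1000 (d+\ell)\log(\ell) |C|/\ell$. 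The hypothesized size bounds give a uniform ceiling on every summand: if $K_i \in \mathcal{C}_i^{(h)}$ with $h \leq k$, then $\ell^h |K_i \cap C \cap R_h| \leq \ell^h \cdot (3/2)\ell^{-h}|K_i| \leq (3/2)\ell^{h+4} \leq (3/2)\ell^{k+4} =: M$. A Bernstein/Chernoff tail estimate then yields $\P[B_{C,j}] \leq \exp(-\Omega(t/M))$ in the large-deviation regime.

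For the Lov\'asz Local Lemma I would describe the dependency graph as follows: $B_{C,j}$ depends on player $i$ only if $i$ owns some configuration meeting $C$ at a relevant scale, and since each resource sits in at most $\ell$ configurations, the number of such players is at most $\ell |C|$; each such player in turn influences only $O(\ell d)$ events $B_{\tilde C, \tilde j}$ through its $\ell$ configurations and the at most $d+1$ admissible values of $\tilde j$. Choosing asymmetric weights of the form $x_{C,j} = \exp(-c|C|/\ell^{k+3})$ for a suitable constant $c$, I would check $\P[B_{C,j}] \leq x_{C,j} \prod_{(\tilde C, \tilde j) \sim (C,j)} (1 - x_{\tilde C, \tilde j})$, which grants the desired selection via LLL.

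The main obstacle is the delicate quantitative balance in the LLL check. The factor $\log(\ell)$ built into $t$ is just enough to absorb the product $\prod (1-x_{\tilde C, \tilde j})$ across dependent events, while the factor $(d+\ell)/\ell$ must simultaneously cover (i) the sum over the at most $d$ size classes $h \in \{j,\dots,k\}$ contributing to $X_{C,j}$ and (ii) the gap between the per-player ceiling $M = \Theta(\ell^{k+4})$ and the scale $|C| \geq \ell^{k+3}$, a factor of $\ell$. Calibrating the Chernoff tail and the weighted LLL hypothesis so that these losses are precisely swallowed by the constants $1000$ and $\log(\ell)$ is where the real computational work lies; once the two sides match, LLL yields the family $\mathcal{K}_i^{(k)}$ claimed by the lemma.
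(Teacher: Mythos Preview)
Your overall plan---uniformly random $K_i$ followed by asymmetric LLL---is exactly what the paper does, but defining the bad events as the aggregated sums $B_{C,j}$ over all $h\in[j,k]$ is a genuine obstruction, not a matter of tuning constants. With your uniform ceiling $M=(3/2)\ell^{k+4}\asymp|C|$ the best tail estimate Bernstein/Chernoff can give is $\mathbb P[B_{C,j}]\le\exp(-\Theta(t/M))=\ell^{-\Theta(1)}$, a bound with \emph{no} dependence on $|C|$. On the other hand, for $j=0$ the event $B_{C,0}$ depends (through the $h=0$ term alone) on every player owning a class-$0$ configuration meeting $C$, hence on $\Omega(\ell|C|)$ players. (Your dependency count is also too optimistic: each such player influences $\Theta(d\,\ell^6)$ events, since a single $C'\in\mathcal C_{i'}^{(h')}$ can meet up to $(3/2)\ell^5$ configurations $\tilde C$ inside $R_{h'}$---you multiplied only by the $d+1$ choices of $\tilde j$.) Because $|C|$ can be as large as $n$ while the failure probability is merely $\mathrm{polylog}(n)^{-1}$, no choice of weights $x_{C,j}$ can satisfy the LLL hypothesis: if $x_{C,j}\ge\mathbb P[B_{C,j}]=\ell^{-\Theta(1)}$ then the product $\prod(1-x_{\tilde C,\tilde j})$ over $\Omega(|C|)$ neighbours is far too small, and if $x_{C,j}$ is small enough for the product to survive then $x_{C,j}<\mathbb P[B_{C,j}]$.

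The idea you are missing, and which the paper supplies, is to declare bad events \emph{per single scale}: for each $C\in\mathcal C^{(k)}$ and each $h\le k$ separately, $B_C^{(h)}$ is the event that $X_C^{(h)}:=\sum_{K\in\mathcal K^{(h)}}|K\cap C\cap R_h|$ exceeds its mean by a threshold chosen in two regimes. For the six scales $k-5\le h\le k$ one permits a large deviation $\Theta(\log\ell)\,|C\cap R_h|$; here $|C\cap R_h|<(3/2)\ell^{9}$ is itself bounded, so both the failure probability and the dependency degree of $B_C^{(h)}$ (only $O(\ell^{8}|C\cap R_h|)$, since $B_C^{(h)}$ depends solely on players touching $C\cap R_h$) are $\mathrm{poly}(\ell)$. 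For $h\le k-6$ each summand $X_{i,C}^{(h)}$ is at most $(3/2)\ell^{4}$, tiny compared to $|C\cap R_h|\ge\ell^{6}/2$; Chernoff then yields concentration with deviation only $\Theta(\log\ell)\,|C\cap R_h|/\ell$ and failure probability $\exp(-\Omega(|C\cap R_h|\log\ell/\ell^{6}))$. Crucially, this exponent and the dependency degree now scale with the \emph{same} quantity $|C\cap R_h|$, which is precisely what the asymmetric LLL (with $x(B_C^{(h)})=\exp(-|C\cap R_h|/\ell^{9}-18\log\ell)$) needs. Summing the per-scale deviations---six large ones contributing $O(\log\ell)|C|$ and up to $d$ small ones contributing $O(d\log\ell/\ell)|C|$---then recovers the additive term $1000(d+\ell)\ell^{-1}\log(\ell)|C|$ in the lemma.
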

Before we prove this lemma, we give an intuition of the statement.
Consider the sets $\I_1,\dotsc,\I_d$ constructed as in Random Experiment~\ref{exp:sequence}.
Then for $C'\in\mathcal C^{(h)}$ we have $\E[\ell^h |C'\cap C\cap \I_h|] = |C'\cap C|$.
Hence
\begin{equation*}
  \sum_{h\le k} \sum_{K\in\mathcal K^{(h)}} |K \cap C| = \E[\sum_{h\le k} \sum_{K\in\mathcal K^{(h)}} \ell^h |K \cap C \cap \I_h|]
\end{equation*}
Similarly for the right-hand side we have
\begin{multline*}
  \E[\frac{1}{\ell} \sum_{j \le h\le k} \sum_{C'\in\mathcal C^{(h)}} \ell^h |C' \cap C \cap \I_h| + O(\frac{d + \ell}{\ell}\log(\ell) |C|)] \\
  = \frac{1}{\ell}\underbrace{\sum_{j\le h\le k} \sum_{C'\in\mathcal C^{(h)}} |C' \cap C|}_{\le \ell |C|} + O\left(\frac{d + \ell}{\ell}\log(\ell) |C|\right)
  = O\left(\frac{d + \ell}{\ell}\log(\ell) |C|\right) .
\end{multline*}
Hence the lemma says that each resource in $C$ is roughly covered $O((d + \ell)/\ell \cdot \log(\ell))$ times by smaller configurations.

We now proceed to prove the lemma by performing the following random experiment
and by Lovasz Local Lemma show that there
is a positive probability of success.
\begin{experiment}
For each $i \in P$ select one configuration $K_i\in\mathcal C_i$ uniformly at random.
\end{experiment}
We write $\mathcal K^{(k)}_i = \{K_i\}$ if $K_i\in\mathcal C^{(k)}_i$ and $\mathcal K^{(k)}_i = \emptyset$ otherwise.
For all $h=0,\dotsc,d$ and $i\in P$
we define the random variable
\begin{equation*}
    X^{(h)}_{i,C} = \sum_{K\in\mathcal K^{(h)}_i} |K \cap C \cap \I_h| \le \min\{3/2 \cdot \ell^4, |C\cap \I_h|\} .
\end{equation*}
Let $X^{(h)}_C = \sum_{i=1}^m X^{(h)}_{i, C}$.
Then
\begin{equation*}
  \E[X^{(h)}_C] \le \frac{1}{\ell} \sum_{C'\in\mathcal C^{(h)}} |C'\cap C\cap \I_h| \le |C\cap \I_h| .
\end{equation*}
We define a set of bad events. As we will show
later, if none of them
occur, the properties from the premise hold.
For each $k$, $C\in\mathcal C^{(k)}$, and $h\le k$ let $B_C^{(h)}$ be the event that
\begin{equation*}
    X_C^{(h)}
    \ge \begin{cases}
      \E[X_C^{(h)}] + 63 |C\cap \I_h| \log(\ell) &\text{ if $k - 5 \le h \le k$}, \\
      \E[X_C^{(h)}] + 135 |C\cap \I_h| \log(\ell) \cdot \ell^{-1} &\text{ if $h \le k - 6$}.
    \end{cases}
\end{equation*}
There is an intuitive reason as to why we define these two different bad events. In the case $h\leq k-6$, we are counting how many times $C$ is intersected by configurations that are much smaller than $C$. Hence the size of this intersection can be written as a sum of independent random variables of value at most $O(\ell^4)$ which is much smaller than the total size of the configuration $|C\cap \I_h|$. Since the random variables are in a much smaller range, Chernoff bounds give much better concentration guarantees and we can afford a very small deviation from the expectation. In the other case, we do not have this property hence we need a bigger deviation to maintain a sufficiently low probability of failure. However, this does not hurt the statement of Lemma~\ref{lma:main-LLL} since we sum this bigger deviation only a constant number of times. With this intuition in mind, we claim the following.
\begin{claim}
For each $k$, $C\in\mathcal C^{(k)}$, and $h\le k$ we have
\begin{equation*}
    \P[B_C^{(h)}] \le \exp\left(- 2\frac{|C \cap \I_h|}{\ell^9} - 18\log(\ell)\right) .
\end{equation*}
\end{claim}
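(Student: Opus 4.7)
The plan is to bound $\P[B_C^{(h)}]$ via a Bernstein-type concentration inequality applied to $X_C^{(h)} = \sum_{i \in P} X_{i,C}^{(h)}$. Since each player $i$ selects $K_i \in \C_i$ independently, the summands $X_{i,C}^{(h)}$ are independent, non-negative random variables; and by the premise of Lemma~\ref{lma:main-LLL} applied to any $K \in \mathcal{C}^{(h)}$ (with $k=h$), we have the deterministic bound $X_{i,C}^{(h)} \le M := \min\{\tfrac{3}{2}\ell^4,\,|C\cap \I_h|\}$. Combined with $\mu := \E[X_C^{(h)}] \le |C\cap \I_h|$ derived above, Bernstein's inequality yields
\[
\P[X_C^{(h)} \ge \mu + t] \le \exp\!\left(-\frac{t^2}{2M(\mu + t/3)}\right),
\]
and the two branches of $B_C^{(h)}$ correspond to two different choices of $t$, which I would analyse separately.

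For the regime $h \le k - 6$, the premise gives $|C\cap \I_h| \ge \tfrac12\ell^{k-h} \ge \tfrac12\ell^6 \gg \tfrac32\ell^4$, so $M = \tfrac32\ell^4$; moreover $t/3 = 45|C\cap \I_h|\log(\ell)/\ell$ is much smaller than $\mu \le |C\cap\I_h|$ because $\log(\ell)/\ell$ is tiny for $\ell = 300{,}000\log^3(n)$. Plugging $t = 135|C\cap\I_h|\log(\ell)/\ell$ into Bernstein and simplifying gives an exponent of order $\frac{|C\cap \I_h|\log^2(\ell)}{\ell^6}$. I would then split this bound into two halves: the first half dominates $2|C\cap \I_h|/\ell^9$ after cancelling $|C\cap \I_h|$ (reducing to $\log^2(\ell)\,\ell^3 \gtrsim 1$), and the second half dominates $18\log(\ell)$ because $|C\cap \I_h| \ge \tfrac12\ell^6$ gives plenty of slack. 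For the regime $k-5 \le h \le k$ I would use the alternative bound $M \le |C\cap \I_h|$. Then $\mu + t/3 \le |C\cap \I_h|(1 + 21\log(\ell))$, and Bernstein with $t = 63|C\cap \I_h|\log(\ell)$ simplifies to an exponent of at least $\tfrac{63^2}{44}\log(\ell) \ge 90\log(\ell)$; this comfortably exceeds $18\log(\ell) + 2|C\cap \I_h|/\ell^9$ since in this regime $|C\cap \I_h| \le \tfrac32\ell^{k-h+4} \le \tfrac32\ell^9$ makes $2|C\cap \I_h|/\ell^9$ only a constant.

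The main obstacle is purely bookkeeping: the constants $63$ and $135$ in the definition of $B_C^{(h)}$ have been tuned precisely so that Bernstein leaves enough slack to dominate \emph{both} the $2|C\cap \I_h|/\ell^9$ term and the additive $18\log(\ell)$ term simultaneously. A subtler point is that in the $h\le k-6$ branch one must use $M = \tfrac32\ell^4$ (rather than $M \le |C\cap \I_h|$) to obtain the extra $\ell^{-2}$ factor in the exponent; this is exactly what justifies the much smaller allowed deviation $t = 135|C\cap \I_h|\log(\ell)/\ell$ in that branch. No deeper ideas beyond a careful case analysis are required.
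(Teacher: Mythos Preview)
Your proposal is correct and takes essentially the same approach as the paper: a two-case analysis (close $h$ vs.\ far $h$) using a concentration inequality for sums of independent bounded variables, exploiting the bound $X_{i,C}^{(h)}\le \min\{3/2\cdot\ell^4,|C\cap R_h|\}$ and the fact that $|C\cap R_h|\le \tfrac32\ell^9$ when $k-5\le h\le k$ while $|C\cap R_h|\ge \tfrac12\ell^6$ when $h\le k-6$. The only cosmetic difference is that the paper invokes its multiplicative Chernoff bound (Proposition~\ref{cor:chernoff}) with $\delta=t/\mu$ rather than Bernstein; since $\min\{\delta,\delta^2\}\mu/(3a)$ and $t^2/(2a(\mu+t/3))$ coincide up to constants in both the $t\gtrsim\mu$ and $t\lesssim\mu$ regimes, the arithmetic works out the same way.
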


\begin{proof}
Consider first the case that $h \ge k - 5$.
By a Chernoff bound (see Proposition~\ref{cor:chernoff}) with
\begin{equation*}
    \delta = 63\frac{|C\cap \I_h| \log(\ell)}{\E[X_C^{(h)}]} \ge 1
\end{equation*}
we get
\begin{equation*}
    \P[B_C^{(h)}] \le \exp\bigg(-\frac{\delta \E[X^{(h)}_C]}{3 |C\cap \I_h|}\bigg) \le \exp(-21\log(\ell))) \le \exp\bigg(-2 \underbrace{\frac{|C\cap \I_h|}{\ell^{9}}}_{\le 3/2} - 18\log(\ell)\bigg).
\end{equation*}
Now consider $h \le k - 6$.
We apply again a Chernoff bound with
\begin{equation*}
    \delta = 135\frac{|C\cap\I_h| \log(\ell)}{\ell \E[X_C^{(h)}]} \ge \frac{1}{\ell} .
\end{equation*}
This implies
\begin{multline*}
    \mathbb P[B_C^{(h)}]
    \le \exp\left(-\frac{\min\{\delta,\delta^2\} \E[X^{(h)}_C]}{3 \cdot 3/2 \cdot \ell^4}\right)
    \le \exp\left(-30\frac{|C\cap\I_h| \log(\ell)}{\ell^6} \right) \\
    \le \exp\left(-2 \frac{|C\cap \I_h|}{\ell^9} - 18\log(\ell)\right) . \qedhere
\end{multline*}
\end{proof}
\begin{proposition}[Lovasz Local Lemma (LLL)]\label{prop:LLL}
Let $B_1, \dotsc, B_t$ be bad events, and let
$G = (\{B_1,\dotsc,B_t\}, E)$ be a dependency graph for them, in which for every $i$, event $B_i$ is mutually independent of all events $B_j$ for which $(B_i, B_j)\notin E$.
Let $x_i$ for $1\le i \le t$ be such that
$0 < x(B_i) < 1$ and
$\P[B_i]\le x(B_i) \prod_{(B_i,B_j)\in E} (1-x(B_j))$.
Then with positive probability no event $B_i$ holds.
\end{proposition}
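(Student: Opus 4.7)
The plan is the classical inductive proof of the Lovász Local Lemma. The heart of the argument is the auxiliary claim that
\begin{equation*}
\P[B_i \mid \bigcap_{B_j\in S} \overline{B_j}] \le x(B_i)
\end{equation*}
for every bad event $B_i$ and every $S \subseteq \{B_1,\dotsc,B_t\}\setminus\{B_i\}$, which I would prove by induction on $|S|$. The base case $S=\emptyset$ follows directly from the hypothesis $\P[B_i]\le x(B_i)\prod_{(B_i,B_j)\in E}(1-x(B_j))\le x(B_i)$ since $0<x(B_j)<1$.

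For the inductive step I split $S = S_1 \cup S_2$ where $S_1 = S \cap \{B_j : (B_i,B_j)\in E\}$ are the neighbors of $B_i$ in $G$ and $S_2 = S\setminus S_1$ are the non-neighbors. If $S_1=\emptyset$ then the conditional probability equals $\P[B_i]\le x(B_i)$ by the mutual independence assumption. Otherwise I write
\begin{equation*}
\P[B_i \mid \bigcap_{B_j\in S} \overline{B_j}]
= \frac{\P[B_i \cap \bigcap_{B_j\in S_1}\overline{B_j} \mid \bigcap_{B_j\in S_2}\overline{B_j}]}{\P[\bigcap_{B_j\in S_1}\overline{B_j} \mid \bigcap_{B_j\in S_2}\overline{B_j}]}.
\end{equation*}
Mutual independence of $B_i$ from $S_2$ bounds the numerator by $\P[B_i]\le x(B_i)\prod_{(B_i,B_j)\in E}(1-x(B_j))$. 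For the denominator I enumerate $S_1=\{B_{j_1},\dotsc,B_{j_r}\}$ and apply the chain rule, obtaining $\prod_{s=1}^r\bigl(1-\P[B_{j_s}\mid \overline{B_{j_1}}\cap\dotsb\cap\overline{B_{j_{s-1}}}\cap\bigcap_{B_j\in S_2}\overline{B_j}]\bigr)$. Each conditioning set has size strictly less than $|S|$, so the inductive hypothesis applied to each factor gives a lower bound of $\prod_{s=1}^r(1-x(B_{j_s}))\ge \prod_{(B_i,B_j)\in E}(1-x(B_j))$ (enlarging the product only decreases it). Dividing yields $x(B_i)$ as required.

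Given the auxiliary claim, a final chain-rule expansion
\begin{equation*}
\P[\bigcap_{i=1}^t \overline{B_i}]
= \prod_{i=1}^t \bigl(1-\P[B_i \mid \overline{B_1}\cap\dotsb\cap\overline{B_{i-1}}]\bigr)
\ge \prod_{i=1}^t (1-x(B_i)) > 0
\end{equation*}
shows that with positive probability no bad event occurs. The step that I expect to require the most care is the inductive step: one must first verify that every denominator $\P[\bigcap_{B_j\in S_1}\overline{B_j}\mid\bigcap_{B_j\in S_2}\overline{B_j}]$ encountered along the way is strictly positive so that the conditional probabilities are well defined. This follows from the same induction, since each of the $r$ chain-rule factors is at least $1-x(B_{j_s})>0$, so no division by zero ever occurs.
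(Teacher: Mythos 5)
The paper does not actually prove this proposition: Lov\'asz Local Lemma is stated as a black-box result imported from the literature (the paper later invokes the Moser--Tardos constructive version by citation), so there is no in-paper proof to compare against. Your proof is the classical Erd\H{o}s--Lov\'asz inductive argument and it is essentially correct: the base case, the split of $S$ into neighbors $S_1$ and non-neighbors $S_2$, the use of mutual independence on the numerator, the chain-rule expansion of the denominator into factors bounded below by $1-x(B_{j_s})$ via the inductive hypothesis, and the final telescoping product are all the standard steps.

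The one place where you should tighten the wording is the well-definedness issue you yourself flag at the end. As written, the argument has a mild circularity: to apply the inductive hypothesis to each chain-rule factor you already need the corresponding conditioning event to have positive probability, but you justify positivity by appealing to the bounds on those same factors. The clean fix is to run a joint induction on $|S|$ proving two statements simultaneously: (a) $\P[\bigcap_{B_j\in S}\overline{B_j}]>0$, and (b) $\P[B_i\mid\bigcap_{B_j\in S}\overline{B_j}]\le x(B_i)$ for every $B_i\notin S$. Statement (a) for $S$ follows from (a) and (b) for any $S'\subsetneq S$ of size $|S|-1$, and then (b) for $S$ is proved exactly as you do, with every conditioning event appearing in the chain rule being an intersection over a set of size strictly less than $|S|$ and hence covered by the inductive hypothesis for (a). This is a standard bookkeeping point and does not change the substance of your argument.
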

Let $k\in\{0,\dotsc,d\}$, $C\in\mathcal C^{(k)}$ and $h\le k$.
For event $B_C^{(h)}$ we set
\begin{equation*}
    x(B_C^{(h)}) = \exp(-|C\cap\I_h| / \ell^9 - 18\log(\ell)) .
\end{equation*}
We now analyze the dependencies of $B_C^{(h)}$.
The event depends only on random variables $K_i$ for a
player $i$ that has a configuration in $\mathcal C^{(h)}_i$ which overlaps with $C\cap \I_h$.
The number of such configurations (in particular,
of such players) is at most
$\ell |C\cap \I_h|$.
Moreover, any player $i'$ of them has only $\ell$ configurations. The random variable $K_{i'}$ only influences
those events $B^{(h')}_{C''}$ where $C' \cap C'' \cap \I_{h'} \neq \emptyset$ for some
some $C'\in\mathcal C_i^{(h')}$.
The number of such events (for player $i'$)
is at most $(3/2)\ell^6$.
Hence, in total $B_C^{(h)}$ depends on $(3/2)|C\cap\I_h|\ell^7<|C\cap\I_h|\ell^8$
other bad events.
We verify the condition for Proposition~\ref{prop:LLL} by calculating
\begin{align*}
    x(B_C^{(h)}) & \prod_{(B_C^{(h)}, B_{C'}^{(h')})\in E} (1 - x(B_{C'}^{(h')})) \\
    &\ge \exp(-|C\cap\I_h|/\ell^9 - 18\log(\ell)) \cdot (1 - \ell^{-18})^{|C\cap\I_h|\ell^8} \\
    &\ge \exp(-|C\cap\I_h|/\ell^9 - 18\log(\ell)) \cdot \exp(- |C\cap\I_h| / \ell^9) \\
    &\ge \exp(-2|C\cap\I_h|/\ell^9 - 18\log(\ell)) \ge \P[B^{(h)}_C] .
\end{align*}
By LLL we have that with positive probability none
of the bad events happen. Let $k\in\{0,\dotsc,d\}$ and $C\in\mathcal C^{(k)}$.
Then for $k - 5 \le h \le k$ we have
\begin{equation*}
  \ell^{h} X^{(h)}_C \le \ell^{h} \E[X_C^{(h)}] + 63 \ell^{h} |C\cap\I_h|\log(\ell)
    \le \ell^{h} \E[X_C^{(h)}] + 95 |C|\log(\ell) .
\end{equation*}
Moreover, for $h\le k-6$ it holds that
\begin{equation*}
  \ell^{h} X^{(h)}_C \le \ell^{h} \E[X_C^{(h)}] + 135 \ell^{h-1} |C\cap\I_h|\log(\ell)
    \le \ell^{h} \E[X_C^{(h)}] + 203 |C|\log(\ell) \cdot \ell^{-1} .
\end{equation*}
We conclude that, for any $0\leq j\leq k$,
\begin{align*}
    \sum_{j\leq h\le k} \sum_{K\in\mathcal K^{(h)}}
 \ell^{h} |K \cap C \cap \I_h|
 &\le \sum_{j\leq h\le k} \ell^{h}  \E[X^{(h)}_{C}] + 1000 \frac{(k-j + 1) + \ell}{\ell} |C| \log(\ell) \\
 &\le \frac{1}{\ell} \sum_{j\leq h\le k} \ell^{h} \sum_{C'\in\mathcal C^{(h)}} |C'\cap C \cap \I_h| + 1000 \frac{d + \ell}{\ell} |C| \log(\ell) .
\end{align*}
This proves Lemma~\ref{lma:main-LLL}.

\begin{remark}{\rm
Since there are at most $\mathrm{poly}(n,m,\ell)$ bad events and each bad event $B$ has $x(B) \ge 1/\mathrm{poly}(n,m,\ell)$, the constructive
variant of LLL by \citet{moser2010constructive} can be applied to find a selection of configuration such that no bad events occur in randomized polynomial time.}
\end{remark}

\subsection{Assignment of resources to configurations}\label{sec:reconstruction}
In this subsection, we show how all the previously established properties allow us to find, in polynomial time, a good assignment of resources to the configurations $\K$ chosen as in the previous subsection. We will denote as in the previous subsection $\K_i^{(k)}=\{K_i\}$ if $K_i\in \C_i^{(k)}$ and $\K_i^{(k)}=\emptyset$ otherwise. We also define $\K^{(k)}=\bigcup_{i}\K_i^{(k)}$ and $\K^{(\geq k)}=\bigcup_{h\geq k}\K^{(k)}$.
Finally we define the parameter
\begin{equation*}
    \gamma = 100.000 \frac{d+\ell}{\ell}\log(\ell) ,
\end{equation*}
which will define how many times each resource can be assigned to configurations in an intermediate solution. Note that $d\le\log(n)/\log(\ell)$. By our choice of $\ell=300.000\log^3(n)$, we have that $\gamma \leq 310.000 \log \log (n)$.
Lemma~\ref{lma:main-LLL} implies the following
bound. For sake of brevity, the proof is deferred to Appendix~\ref{appendix_reconstruct}.
\begin{claim}
\label{cla:reconstruct}
For any $k\geq 0$, any $0\leq j\leq k$, and any $C\in \K^{(k)}$
\begin{equation*}
    \sum_{j\leq h\leq k}\sum_{K\in \K^{(h)}} \ell^{h}|K\cap C \cap \I_h| \leq 2000\frac{d+\ell}{\ell}\log (\ell) |C|
\end{equation*}
\end{claim}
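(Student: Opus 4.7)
The plan is to combine Lemma~\ref{lma:main-LLL} with Lemma~\ref{lma-overlap-representative} and the degree bound $\ell$. The idea is that Lemma~\ref{lma:main-LLL} already bounds the left-hand side of the claim in terms of $\sum_{h}\sum_{C'\in\C^{(h)}} \ell^{h}|C'\cap C\cap \I_h|$ (i.e.\ the analogous expression but summing over the full set of configurations, not just the selected ones), plus an additive term of the form $1000\frac{d+\ell}{\ell}\log(\ell)|C|$, which already has exactly the desired shape. So all that remains is to show that the first summand in Lemma~\ref{lma:main-LLL} is also bounded by $O\!\left(\frac{d+\ell}{\ell}\log(\ell)|C|\right)$.

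To that end, I would first apply Lemma~\ref{lma-overlap-representative} individually for each $h \in [j,k]$: since $C\in \K^{(k)}\subseteq \C^{(\geq h)}$, multiplying the lemma's inequality by $\ell^h$ yields
\begin{equation*}
  \sum_{C'\in \C^{(h)}} \ell^{h} |C'\cap C\cap \I_h|
  \;\le\; 10\,|C| \;+\; 10 \sum_{C'\in \C^{(h)}} |C'\cap C| .
\end{equation*}
Summing this over $j\le h \le k$, the first term contributes at most $10(d+1)|C|$ because $k-j+1\le d+1$. For the second term, I would use the fact that each resource $r\in C$ appears in at most $\ell$ configurations, so collapsing the double sum gives $\sum_{h}\sum_{C'\in\C^{(h)}} |C'\cap C| \le \ell|C|$. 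Together this yields the bound $\sum_{j\le h \le k}\sum_{C'\in\C^{(h)}} \ell^h |C'\cap C\cap \I_h| \le 10(d+1)|C| + 10\ell|C| \le 20(d+\ell)|C|$.

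Plugging this into Lemma~\ref{lma:main-LLL} gives
\begin{equation*}
  \sum_{j\le h\le k}\sum_{K\in\K^{(h)}} \ell^{h} |K\cap C\cap \I_h|
  \;\le\; \frac{20(d+\ell)}{\ell}|C| + 1000\,\frac{d+\ell}{\ell}\log(\ell)|C| ,
\end{equation*}
and since $\log(\ell)\ge 1$ the first summand is absorbed into the second with plenty of room, yielding the claimed constant $2000$.

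There is no real obstacle here: the claim is essentially a bookkeeping consequence of the two preceding lemmas, and the only thing to be careful about is that the range of $h$ contributes a factor $d+1$ (not just a constant) and that the double sum over configurations $C'$ incident to $C$ is bounded using the degree constraint $\ell$, rather than, say, the trivial bound $|C|$ per configuration which would be far too weak. Both factors conveniently land in the $(d+\ell)/\ell$ term.
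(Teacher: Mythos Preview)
Your proposal is correct and follows essentially the same route as the paper: invoke Lemma~\ref{lma:main-LLL}, bound the remaining configuration sum via Lemma~\ref{lma-overlap-representative} together with the degree bound $\sum_{h}\sum_{C'\in\C^{(h)}}|C'\cap C|\le \ell|C|$, and absorb the resulting $O\!\bigl(\tfrac{d+\ell}{\ell}\bigr)|C|$ term into the $1000\tfrac{d+\ell}{\ell}\log(\ell)|C|$ slack. The only cosmetic difference is that the paper carries the $1/\ell$ factor from Lemma~\ref{lma:main-LLL} through the computation (arriving at $20|C|$ via $k-j\le d\le \ell$) whereas you first bound the inner sum by $20(d+\ell)|C|$ and divide afterward; both arrive at the same conclusion.
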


The main technical part of this section is
the following lemma that is proved by induction.
\begin{lemma}
\label{lem:reconstruct}
For any $j\geq 0$, there exists an assignment of resources of $\I_j$ to configurations in $\K^{(\geq j)}$ such that no resource is taken more than $\gamma$ times and each configuration $C\in \K^{(k)}$ ($k\geq j$) receives at least
\begin{equation*}
     \left(1-\frac{1}{\log (n)} \right)^{2(k-j)}\ell^{k-j} |C\cap \I_k|-\frac{3}{\gamma}\sum_{j\leq h\leq k} \sum_{K\in \K^{(h)}} \ell^{h-j}|K\cap C \cap \I_h|
\end{equation*}
resources from $\I_k$.
\end{lemma}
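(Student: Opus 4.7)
The plan is to prove the lemma by downward induction on $j$, starting from $j \ge d$ (where $\K^{(\ge j)}$ is empty and the statement holds vacuously) and descending to $j = 0$. Given an assignment $A_{j+1}$ of $\I_{j+1}$ to $\K^{(\ge j+1)}$ satisfying the bounds, the inductive step builds $A_j$ in two stages: first lift $A_{j+1}$ from $\I_{j+1}$ to $\I_j$ while still serving only the old configurations $\K^{(\ge j+1)}$, then augment the result with assignments for the newly entering configurations $\K^{(j)}$.

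For the lifting stage I would feed $A_{j+1}$ into Lemma~\ref{lma-good-solution} (after briefly verifying the precondition $\alpha_{j+1}(C) \ge \ell^3/1000$ using Lemma~\ref{lma-size} together with Claim~\ref{cla:reconstruct}). The output is an assignment $A'$ of $\I_j$ to $\K^{(\ge j+1)}$, with capacity $\gamma$, in which every $C \in \K^{(k)}$ ($k \ge j+1$) receives at least $\ell(1 - 1/\log n)\, \alpha_{j+1}(C)$ resources. Expanding the definitions and dropping nonnegative terms yields the key slack identity
\[
\ell\left(1 - \tfrac{1}{\log n}\right) \alpha_{j+1}(C) - \alpha_j(C) \;\ge\; \tfrac{3}{\gamma} \sum_{K \in \K^{(j)}} |K \cap C \cap \I_j|,
\]
so $A'$ over-allocates each old $C$ by at least the total displacement budget it will owe to new configurations that overlap it.

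For the augmentation stage I would formulate the extension to $\K^{(j)}$ as an integer flow problem: a source $s$ sending $\alpha_j(K)$ units to each $K \in \K^{(j)}$; each $K$ sending up to one unit to each $r \in K \cap \I_j$; and each $r$ pushing its inflow either directly to the sink $t$ (through the residual capacity $\gamma - u(r)$ left by $A'$) or by displacing an assignment $A'(r, C) = 1$ via an edge $r \to C$ of capacity $1$ followed by an edge $C \to t$ of capacity equal to $C$'s slack from the previous step. By integrality of flows with integer capacities, a feasible integer flow of value $\sum_K \alpha_j(K)$ translates into the desired integral $A_j$: each new $K$ gets $\alpha_j(K)$ resources and each old $C$ loses at most its slack budget.

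The main obstacle is establishing feasibility of this augmentation, i.e., that the network above admits a flow of value $\sum_K \alpha_j(K)$. By max-flow min-cut, this amounts to showing for every $S \subseteq \K^{(j)}$ that the combined free and displaceable capacity around $\bigcup_{K \in S}(K \cap \I_j)$ is at least $\sum_{K \in S} \alpha_j(K)$. Unpacking reduces this to controlling sums of overlaps $\sum_{C \in \K^{(\ge j+1)}} |K \cap C \cap \I_j|$ and $\sum_{K' \in \K^{(j)}} |K' \cap K \cap \I_j|$, which are exactly the quantities controlled by Claim~\ref{cla:reconstruct}. The delicate point will be that the factor $\tfrac{3}{\gamma}$ multiplying the overlap sums inside $\alpha_j$ must match the per-overlap capacity that the displacement budgets provide in the flow network, which is precisely why the tuning $\gamma = \Theta\!\bigl(\tfrac{d + \ell}{\ell}\log\ell\bigr)$ is chosen as it is.
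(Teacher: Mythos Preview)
Your induction skeleton and the lifting step via Lemma~\ref{lma-good-solution} (including the check of its precondition through Lemma~\ref{lma-size} and Claim~\ref{cla:reconstruct}) coincide with the paper's proof. The divergence is entirely in the augmentation stage. The paper does not set up a flow. Instead it (i) hands every new $K\in\K^{(j-1)}$ all of $K\cap\I_{j-1}$ except resources lying in more than $\gamma$ configurations of $\K^{(j-1)}$, so that resources may now be used up to $2\gamma$ times in total; then (ii) for every over-used resource it deletes uniformly at random the excess copies from the old configurations in $\K^{(\ge j)}$, bounding each old configuration's loss by a Chernoff estimate (with a separate deterministic treatment when the expected loss $\mu$ is tiny, which is where the extra factor $(1-1/\log n)$ is spent).

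Your flow formulation is a genuinely different route, and it does go through. The min-cut verification does not even need Claim~\ref{cla:reconstruct}: for a cut determined by $S\subseteq\K^{(j)}$, $T\subseteq\I_j$, $U\subseteq\K^{(\ge j+1)}$, writing $a_S(r)=|\{K\in S:r\in K\}|$ and $u_U(r)=|\{C\in U:r\in R_C\}|$, the required inequality reduces (using only your slack identity) to
\[
\frac{3}{\gamma}\sum_{r}a_S(r)\bigl(a_S(r)+u_U(r)\bigr)\;\ge\;\sum_{r\in T}\bigl(a_S(r)+u_U(r)-\gamma\bigr).
\]
Since $u_U(r)\le\gamma$, each positive summand on the right is at most $a_S(r)$ and occurs only when $a_S(r)+u_U(r)>\gamma$, so the right-hand side is at most $\tfrac{1}{\gamma}\sum_r a_S(r)(a_S(r)+u_U(r))$, a third of the left. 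What your approach buys is a fully deterministic inductive step with no case split on $\mu$; what the paper's randomized deletion buys is avoiding the flow/min-cut machinery in favor of an elementary probabilistic argument.
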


\begin{proof}
We start from the biggest configurations and then iteratively reconstruct a good solution for smaller and smaller configurations. Recall $d$ is the smallest integer such that $\K^{(\geq d)}$ is empty.
Our base case for these configurations in $\K^{(\geq d)}$ is vacuously satisfied. 


Now assume that we have a solution at level $j$, i.e. an assignment of resources to configurations in $\K^{(\geq j)}$ such that no resource is taken more than $\gamma$ times and each configuration $C\in \K^{(k)}$ such that $k\geq j$ receives at least
\begin{equation*}
     \left(1-\frac{1}{\log (n)} \right)^{2(k-j)}\ell^{k-j} |C\cap \I_k|-\frac{3}{\gamma}\sum_{j\leq h\leq k} \sum_{K\in \K^{(h)}} \ell^{h-j}|K\cap C \cap \I_h|
\end{equation*}
resources from $\I_j$.
We show that this implies a solution at level $j-1$ in the following way. First by Lemma~\ref{lma-good-solution}, this implies an assignment of resources of $\I_{j-1}$ to configurations in $\K^{(\geq j)}$ such that each $C\in \K^{(k)}$ receives at least
\begin{align*}
     &\left(1-\frac{1}{\log (n)} \right)\ell \left(\ell^{k-j} \left(1-\frac{1}{\log (n)} \right)^{2(k-j)} |C\cap \I_k|-\frac{3}{\gamma}\sum_{j\leq h\leq k} \sum_{K\in \K^{(h)}} \ell^{h-j}|K\cap C \cap \I_h|\right)\\
     &=\left(1-\frac{1}{\log (n)} \right)^{2(k-(j-1))-1} \ell^{k-(j-1)} |C\cap \I_k|-\frac{3}{\gamma}\left(1-\frac{1}{\log (n)} \right)\sum_{j\leq h\leq k} \sum_{K\in \K^{(h)}} \ell^{h-(j-1)}|K\cap C \cap \I_h|\\
     &\geq  \left(1-\frac{1}{\log (n)} \right)^{2(k-(j-1))-1}\ell^{k-(j-1)} |C\cap \I_k|-\frac{3}{\gamma}\sum_{j\leq h\leq k} \sum_{K\in \K^{(h)}} \ell^{h-(j-1)}|K\cap C \cap \I_h|
\end{align*}
resources and no resource of $\I_{j-1}$ is taken more than $\gamma$ times. Note that we can apply Lemma \ref{lma-good-solution} since we have by Claim \ref{cla:reconstruct} and Lemma \ref{lma-size}
\begin{align*}
        &\left(1-\frac{1}{\log (n)} \right)^{2(k-j)}\ell^{k-j} |C\cap \I_k|-\frac{3}{\gamma}\sum_{j\leq h\leq k} \sum_{K\in \K^{(h)}} \ell^{h-j}|K\cap C \cap \I_h| \\
        &\geq \frac{\ell^{k-j}}{e^2}|C\cap R_k| - \frac{3}{\gamma}2000\ell^{-j}\frac{d+\ell}{\ell}\log(\ell)|C|\\
        &\geq \ell^{-j}|C|\left(\frac{1}{2e^2}-\frac{6000}{\gamma}\frac{d+\ell}{\ell}\log(\ell)\right)\\
        &\geq \frac{\ell^{-j}|C|}{3e^2}>\frac{\ell^3}{1000}
\end{align*}
Now consider configurations in $\K^{(j-1)}$ and proceed for them as follows. Give to each $C\in\K^{(j-1)}$ all the resources in $C\cap \I_{j-1}$ except all the resources that appear in more than $\gamma$ configurations in $\K^{(j-1)}$. Since each deleted resource is counted at least $\gamma$ times in the sum $\sum_{K\in \K^{(j-1)}}|K\cap C\cap \I_{j-1}|$, we have that each configuration $C$ in $\K^{(j-1)}$ receives at least 
\begin{equation*}
    |C\cap \I_{j-1}|-\frac{1}{\gamma}\sum_{K\in \K^{(j-1)}}|K\cap C\cap \I_{j-1}|
\end{equation*}
resources and no resource is taken more than $\gamma$ times by configurations in $\K^{(j-1)}$. Notice that now every resource is taken no more than $\gamma$ times by configurations in $\K^{(\geq j)}$ and no more than $\gamma$ times by configurations in $\K^{(j-1)}$ which in total can sum up to $2\gamma$ times. 

Therefore to finish the proof consider an resource $i\in \I_{j-1}$. This resource is taken $b_i$ times by configurations in $\K^{(\geq j)}$ and $a_i$ times by configurations in $\K^{(j-1)}$. If $a_i+b_i \leq \gamma$, nothing needs to be done. Otherwise, denote by $O$ the set of problematic resources (i.e. resources $i$ such that $a_i+b_i>\gamma$). For every $i\in O$, select uniformly at random $a_i+b_i-\gamma$ configurations in $\K^{(\geq j)}$ that currently contain resource $i$ and delete the resource from these configurations. When this happens, each configuration in $C\in \K^{(\geq j)}$ that contains $i$ has a probability of $(a_i+b_i-\gamma)/b_i$ to be selected to loose this resource. Hence the expected number of resources that $C$ looses with such a process is

\begin{equation*}
    \mu = \sum_{i\in O\cap C} \frac{a_i+b_i-\gamma}{b_i}
\end{equation*}
It is not difficult to prove the following claim. However, for better clarity we defer its proof to appendix \ref{appendix_reconstruct}.
\begin{claim} For any $C\in \K^{(\geq j)}$,
\label{cla:reconstruct_mu}
\begin{equation*}
    \frac{1}{\gamma^2}\sum_{K\in \K^{(j-1)}}|K\cap C \cap \I_{j-1}\cap O|\leq \mu \leq \frac{2}{\gamma} \sum_{K\in \K^{(j-1)}}|K\cap C \cap \I_{j-1}\cap O|
\end{equation*}
\end{claim}
Assume then that $\mu \leq \frac{|C\cap \I_k|}{10^{12}\log^3 (n)}$. Note that $C$ cannot loose more than $\sum_{K\in \K^{(j-1)}}|K\cap C \cap \I_{j-1}\cap O|$ resources in any case. Therefore, by assumption on $\mu$, and since 
\begin{equation*}
    \mu\geq \frac{1}{\gamma^2}\sum_{K\in \K^{(j-1)}}|K\cap C \cap \I_{j-1}\cap O|\ ,
\end{equation*}
we have that 
\begin{equation*}
    \sum_{K\in \K^{(j-1)}}|K\cap C \cap \I_{j-1}\cap O|\leq \frac{\gamma^2}{10^{12}\log^3 (n)} |C\cap \I_k|\leq \frac{10^{11} \log^2 \log (n)}{10^{12}\log^3 (n)}|C\cap \I_k|\leq \frac{1}{\log (n)}|C\cap \I_k|\ .
\end{equation*}
Therefore $C$ looses at most $|C\cap \I_k|/\log (n)$ resources. Otherwise we have that 
\begin{equation*}
    \mu > \frac{|C\cap \I_k|}{10^{12}\log^2 (n)} \geq \frac{\ell^3}{10^{12} \log^3 (n)} \geq 200\log(n)
\end{equation*}
by Lemma~\ref{lma-size}. Hence noting $X$ the number of deleted resources in $C$ we have that 
\begin{equation*}
    \mathbb P\left(X\geq \frac{3}{2}\mu \right) \leq \exp\left(-\frac{\mu}{12} \right)\leq \frac{1}{n^{10}}.
\end{equation*}
With high probability no configuration looses more than 
\begin{equation*}
    \frac{3}{2}\mu \leq \frac{3}{\gamma}\sum_{K\in \K^{(j-1)}}|K\cap C \cap \I_{j-1}\cap O|\leq \frac{3}{\gamma}\sum_{K\in \K^{(j-1)}}|K\cap C \cap \I_{j-1}|
\end{equation*}
resources. Hence each configuration $C\in \K^{(\geq j)}$ ends with at least
\begin{align*}
    &\left(1-\frac{1}{\log (n)} \right)^{2(k-(j-1))-1}\ell^{k-(j-1)} |C\cap \I_k|-\frac{3}{\gamma}\sum_{j\leq h\leq k} \sum_{K\in \K^{(h)}} \ell^{h-(j-1)}|K\cap C \cap \I_h|\\
    &-\frac{1}{\log (n)}\left(1-\frac{1}{\log (n)} \right)^{2(k-(j-1))-1}\ell^{k-(j-1)} |C\cap \I_k| - \frac{3}{\gamma}\sum_{K\in \K^{(j-1)}}|K\cap C \cap \I_{j-1}|\\
    &\geq \left(1-\frac{1}{\log (n)} \right)^{2(k-(j-1))}\ell^{k-(j-1)} |C\cap \I_k|-\frac{3}{\gamma}\sum_{j-1\leq h\leq k} \sum_{K\in \K^{(h)}} \ell^{h-(j-1)}|K\cap C \cap \I_h|
\end{align*}
resources which concludes the proof.
\end{proof}
\begin{corollary}
\label{reconstruct_corollary}
There exists an assignment of resources $\I$ to $\K$ such that each configuration $C\in  \K$ receives at least $\left\lfloor |C|/(100\gamma) \right\rfloor$ resources. Moreover, this assignment can be found in polynomial time.
\end{corollary}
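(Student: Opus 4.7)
The plan is to instantiate Lemma~\ref{lem:reconstruct} at $j=0$ and then convert the resulting $\gamma$-overloaded assignment into a genuine (one-to-one) assignment by a standard bipartite matching argument. Applying the lemma with $j=0$ and using $\I_0 = \I$ produces an assignment of $\I$ to $\K$ in which no resource is taken more than $\gamma$ times and each $C\in\K^{(k)}$ receives at least
\[
\left(1-\frac{1}{\log(n)}\right)^{2k}\ell^{k}|C\cap \I_k|\;-\;\frac{3}{\gamma}\sum_{0\le h\le k}\sum_{K\in\K^{(h)}}\ell^{h}|K\cap C\cap \I_h|
\]
resources.

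To turn this into a constant fraction of $|C|$, I would first invoke Lemma~\ref{lma-size} to lower bound $|C\cap\I_k|\ge|C|/(2\ell^{k})$, so that the leading term becomes $\tfrac{1}{2}(1-1/\log n)^{2k}|C|$. Since $k\le d\le \log(n)/\log(\ell)$ and $\log(\ell)\ge 3\log\log(n)$ by the choice $\ell=300.000\log^{3}(n)$, we have $2k/\log(n)=o(1)$, so $(1-1/\log n)^{2k}\ge 1/2$ for $n$ large enough and the leading term is at least $|C|/4$. For the subtracted term, Claim~\ref{cla:reconstruct} bounds the double sum by $2000\,((d+\ell)/\ell)\log(\ell)\,|C|$; plugging in $\gamma=100.000\,((d+\ell)/\ell)\log(\ell)$ cancels the prefactor and leaves a loss of at most $(3/\gamma)\cdot 2000\,((d+\ell)/\ell)\log(\ell)\,|C|=0.06\,|C|$. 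Hence each $C\in\K$ is assigned at least $c\,|C|$ resources in the intermediate solution for some absolute constant $c\ge 1/10$, while every resource is used at most $\gamma$ times.

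The final step is to round this intermediate assignment to a real matching. Form the bipartite graph $H$ whose edges are the pairs $(r,C)$ with $r$ assigned to $C$: each resource has degree at most $\gamma$ in $H$ and each $C\in\K$ has degree at least $c|C|$. Placing weight $1/\gamma$ on every edge of $H$ gives a fractional $b$-matching that sends at most one unit through each resource and supplies $c|C|/\gamma\ge |C|/(100\gamma)$ units to each configuration. By integrality of the bipartite $b$-matching polytope---or, constructively, by a single max-flow computation---this fractional solution can be rounded in polynomial time to an integral assignment in which every resource is used at most once and every $C\in\K$ still receives at least $\lfloor |C|/(100\gamma)\rfloor$ resources, which is exactly the conclusion of the corollary. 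Polynomial running time then follows from Lemma~\ref{lem:reconstruct} being constructive together with the polynomial-time solvability of bipartite $b$-matching; no step looks genuinely hard, the only delicate point being the verification that $(1-1/\log n)^{2k}$ does not decay, which is precisely where the choice $\ell=300.000\log^{3}(n)$ is needed.
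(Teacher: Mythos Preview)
Your proposal is correct and follows essentially the same approach as the paper: apply Lemma~\ref{lem:reconstruct} at $j=0$, bound the positive term via Lemma~\ref{lma-size} and the negative term via Claim~\ref{cla:reconstruct} to obtain a constant fraction of $|C|$ with overload at most $\gamma$, and then scale by $1/\gamma$ and invoke integrality of the bipartite matching polytope. The only difference is cosmetic: the paper uses the cruder bound $(1-1/\log n)^{2k}\ge 1/e^{2}$ (yielding $|C|/(2e^{2})-0.06|C|$) whereas you observe that $2k/\log n\le 2/\log(\ell)=o(1)$ and hence get the sharper $|C|/4-0.06|C|$; either way the conclusion and the rounding step are identical.
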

\begin{proof}
Lemma \ref{lem:reconstruct} for $k=0$ and Claim \ref{cla:reconstruct} together imply that
we can assign at least
\begin{equation*}
    \frac{|C|}{2e^2}-\frac{6000}{100.000}|C|\geq \frac{|C|}{100}
\end{equation*}
resources to every $C\in \K$ such that no resource in $\I$ is assigned more than $\gamma$ times. In particular, we can fractionally assign
at least $|C| / (100\gamma)$ resources
to each $C\in \K$ such that no resource
is assigned more than once.
By integrality of the bipartite matching polytope, the corollary follows.
\end{proof}

\section{Combinatorial Santa Claus with large hyperedges}
We now show how the previous result implies a $O(\log\log(n))$-approximation algorithm for the Combinatorial Santa Claus problem
when all hyperedges are sufficiently large.
More precisely, we show that we can either decide that there is no $\delta$-relaxed perfect matching or find
a $(\delta / \alpha)$-relaxed perfect matching for $\alpha = O(\log\log(n))$, when all hyperedges are of size at
least $\alpha / \delta$.

First consider $\delta = 1$.
We write the natural linear program for the bipartite hypergraph matching problem. We have a variable $x_C$ for every edge $C\in\C$ which tells us if it is taken or not. We require every vertex $i\in P$ to have exactly one configuration incident to it and every vertex $j \in \I$ to be contained in at most one configuration.
\begin{align*}
     \sum_{C\in \C : i \in C} & x_C = 1 \quad \text{ for all } i \in P \\
     \sum_{C\in \C : j \in C} & x_C \leq 1 \quad \text{ for all } j \in \I \\
    &  x_C \geq 0 \quad \text{ for all } C \in \C
\end{align*}
Now alter $(P\cup R, \C)$ by replacing each edge $C\in \C$ with $x_C \cdot T$ copies where $T$ is the lowest common multiple all denominators of $x_{C'}$, $C'\in\C$. In the resulting hypergraph each player appears in exactly $T$ hyperedges and every resource in at most $T$ hyperedges.
Now, a $(1/\alpha)$-relaxed matching can be found using the result in the previous section about regular hypergraphs.
Indeed, the intermediate hypergraph could have exponential size, but this can be avoided by performing the preprocessing in Appendix~\ref{appendix_main} without constructing the graph explicitly.

To extend this to general $\delta$ one can consider same LP for a new bipartite hypergraph which has the same set of vertices, but every edge $C \in \C$ incident on any $i \in P$ is replaced by all subsets of $C$ that have size at least $\delta|C|$.
Although this LP has exponential size, it can be solved efficiently using a separation oracle for its dual
(we refer to~\cite{BansalSrividenko} for details).

\section{On the connection to the Santa Claus problem}
In this section, we show how the Santa Claus problem and the Combinatorial Santa Claus problem are almost
equivalent in terms of approximability.
We show that the Combinatorial Santa Claus problem is a special case of the Santa Claus problem, hence preserving the
approximation rate of any algorithm. On the other hand,
given a $c$-approximation for the Combinatorial Santa Claus problem, we get a $O((c\log^*(n))^2)$-approximation
for the Santa Claus problem. We note that there is a reduction similar to the latter in~\cite{DBLP:conf/focs/ChakrabartyCK09},
but it loses a factor of $O(\log(n))$.

\subsection{Combinatorial Santa Claus to Santa Claus}
The idea in this reduction is to replace each player by a set of players, one for each of the $t$ configuration containing him.
These players will share together $t-1$ large new resources, but to satisfy all, one of them has to get other resources, which
are the original resources in the corresponding configuration.
\begin{description}
\item[Players.]
For every vertex $v \in P$, and every hyperedge $C \in \C$ that $v$ belongs to, we create a player $p_{v,C}$ in the Santa Claus instance. 
\item[Resources.]
For every vertex $u \in \I$, create a resource $r_{u}$ in the Santa Claus instance. 
For any vertex $v \in P$ such that it belongs to $t$ edges in $\C$, create $t-1$ resources $r_{v,1}, r_{v,2}, \ldots, r_{v,t-1} $. 
\item[Values.]
For any resource $r_{u}$ for some $u \in \I$ and any player $p_{v,C}$ for some $C \in \C$, the resource has a value $\frac{1}{|C|-1}$ if $u \in C$, otherwise it has value $0$. Any resource $r_{v,i}$ for some $v \in P$ and $i \in \mathbb N$, has value $1$ for any player $p_{v,C}$ for some $C \in \C$ and $0$ to all other players. 
\end{description}
It is easy to see that given an $\alpha$-relaxed matching in the original instance, one can construct an $\alpha$-approximate solution for the Santa Claus instance.

For the other direction, notice that for each $v \in P$, there exists a player $p_{v,C}$ for some $C \in \C$, such that it gets resources only of the type $r_{u}$. One can simply assign the resource $u \in \I$ to the player $v$ for any resource $r_{u}$ assigned to $p_{v,C}$.

\subsection{Santa Claus to Combinatorial Santa Claus}
This subsection is devoted to the proof of the following theorem.
\begin{theorem}
A $c$-approximation algorithm to the Combinatorial Santa Claus problem yields an
$O((c\log^* (n))^2)$-approximation algorithm to the Santa Claus problem.
\end{theorem}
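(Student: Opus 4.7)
My plan is to pass through a configuration LP and separate players by whether they can be satisfied by one ``fat'' resource or must be assigned many ``thin'' ones. For the thin side I would iteratively invoke the assumed $c$-approximation for Combinatorial Santa Claus, which is where both the $c$ and the $\log^*(n)$ factors enter.

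Concretely, I would first binary-search a guess $T^*$ for the optimum and solve the configuration LP: for each player $i$ a configuration is a subset $C$ of resources with $\sum_{j\in C} v_{ij}\geq T^*$, and the LP requires $\sum_C x_{i,C}\geq 1$ for every player and $\sum_{i,\,C\ni j} x_{i,C}\leq 1$ for every resource. This LP can be solved (approximately) via the standard Lagrangian separation oracle as in~\cite{BansalSrividenko}. Then I would fix a threshold $\tau = T^*/\beta$ with $\beta = \Theta(c\log^*(n))$, call a resource $j$ \emph{fat} for player $i$ if $v_{ij}\geq \tau$ and \emph{thin} otherwise, and call a player \emph{fat} if at least half of her LP value comes from fat resources. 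Fat players are routed through a bipartite matching: by integrality of the bipartite matching polytope the fractional fat assignment can be rounded to give each such player one fat resource of value $\geq \tau$, which already meets the target up to the $\beta$ factor.

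The thin players are the technical heart. For these I would build a hypergraph whose hyperedges are bundles of thin resources summing to value $\approx \tau$ for the intended player; the LP solution induces a fractional relaxed matching in this hypergraph. In general these bundles can be huge and the hypergraph non-regular, so our main theorem does not apply directly. I would handle this by an iterated clustering in the spirit of Chakrabarty-Chuzhoy-Khanna: at each level, run the $c$-approximation for Combinatorial Santa Claus on the current hypergraph to extract a relaxed matching of bundles, then contract each matched group into a single ``super-resource''. Each round replaces bundles of size $N$ by super-resources of size roughly $\mathrm{polylog}(N)$, so after $O(\log^*(n))$ rounds the remaining configurations are of size $\mathrm{poly}(\beta)$; at that point the large-hyperedge hypothesis of our main theorem is satisfied and a final invocation of the $c$-approximation finishes the job. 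Each round loses a factor $O(c)$ in value, but, crucially, by choosing the target value at level $s$ to be a $(1 - 1/\log^*(n))$ fraction of the previous one and amortizing the $c$ factor against the $\log^*$-fast shrinkage of bundle sizes, the compounded loss can be kept at $O(\beta)$ overall, yielding total approximation $O(\beta^2) = O((c\log^*(n))^2)$.

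The main obstacle is making the iterated clustering respect our main theorem's hypotheses at every level. Contracting bundles into super-resources changes the incidence structure, and one must show both that a fractional relaxed matching still exists after contraction (obtained by coarsening the LP solution in parallel with the contractions) and that the contracted hypergraph still satisfies the ``hyperedges large and degrees regular'' structure required to feed it back into the $c$-approximation. This entails carefully tracking which intended bundle each player retains after the matching drops, and showing that the value preserved is at least $(1 - 1/\log^*(n))$ times the previous target. Once these invariants are set up, the $(c\log^*(n))^2$ accounting is routine; defining the intermediate instances correctly---so that configuration sizes shrink at a $\log^*$ rate and the losses add rather than multiply---is where the real work lies.
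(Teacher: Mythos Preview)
Your proposal takes a genuinely different route from the paper, and the core mechanism you rely on does not work as stated.

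The paper does \emph{not} iterate the Combinatorial Santa Claus oracle. Instead it performs a single structural reduction in four steps: (i) round values to powers of two in $(1/(2n),1]$; (ii) for each player, guess the index $k\le\log^*(2n)$ such that resources with value in $(1/(\log)^k(2n),\,1/(\log)^{k+1}(2n)]$ already contribute a $1/\log^*(2n)$ fraction of her value (implemented by $\log^*(2n)$ auxiliary players sharing value-$1$ resources with the original player); (iii) observe that inside such a range there are at most $(\log)^{k+1}(2n)$ distinct powers of two, and for each distinct value introduce auxiliary ``bundler'' players so that every player's nonzero values lie in $\{0,v_i,1\}$; (iv) encode the three-value instance as a Combinatorial Santa Claus instance via a simple gadget. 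The oracle is applied once to the final instance; the two $\log^*(n)$ factors come from steps (ii) and (iii), and the two $c$ factors come from unwinding the approximation through the two layers of auxiliary players in steps (iii) and (iv).

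Your iterated clustering has two concrete gaps. First, a single invocation of the $c$-approximation on a hypergraph with configurations of size $N$ returns a relaxed matching whose parts have size $N/c$, not $\mathrm{polylog}(N)$; nothing in the Combinatorial Santa Claus guarantee produces the tower-of-logs shrinkage you need for a $\log^*(n)$ depth. Second, even granting the shrinkage, your accounting is off: if each of $\log^*(n)$ rounds loses a multiplicative factor $c$, the compounded loss is $c^{\log^*(n)}$, not $O(c\log^*(n))$. The sentence about ``amortizing the $c$ factor against the $\log^*$-fast shrinkage'' does not correspond to any valid inequality. Finally, after contracting matched bundles into super-resources, different super-resources have different aggregate values for different players, so the resulting instance is general Santa Claus again, not Combinatorial Santa Claus; you cannot feed it back to the oracle without a further reduction---which is exactly the problem you started with. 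The paper sidesteps all of this: the $\log^*$ phenomenon enters purely combinatorially, through the number of distinct powers of two in a doubly-logarithmic range, and the oracle is called only on the final reduced instance.
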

\begin{proof}
We write $(\log)^k(n) = \underbrace{\log \cdots \log}_{\times k}(n)$ and $(\log)^0(n) = n$.   

\paragraph*{Construction.}
We describe how to construct a hypergraph matching instance from a Santa Claus instance in four steps by reducing to the following more and more special cases.

\begin{description}
    \item[Geometric grouping.] In this step, given arbitrary $v_{ij}$, we reduce it to an instance such that $\OPT = 1$ and for each $i, j$ we have $v_{ij} = 2^{-k}$ for some integer $k$ and $1/(2n) < v_{ij} \le 1$.
    This step follows easily from guessing $\OPT$, rounding down the sizes, and omitting all small elements in a solution.
    \item[Reduction to O(log*(n)) size ranges.] Next, we reduce to an instance such that for each player $i$ there is some $k \le \log^*(2n)$ such that for each resource $j$, $v_{ij}\in\{0, 1\}$ or $1/(\log)^k(2n) < v_{ij} \le 1/(\log)^{k+1}(2n)$. We explain this step below.
    
    Each player and resource is copied to the new instance.
However, we will also add auxiliary players and resources.
Let $i$ be a player.
In the optimal solution there is some
$0 \le k \le \log^*(2n)$ such that the values of all resources
$j$ with $1/(\log)^k(2n) < v_{ij} \le 1/(\log)^{k+1}(2n)$ assigned
to player $i$ sum up to at least $1/\log^*(2n)$.
Hence, we create $\log^*(2n)$ auxiliary players which correspond to
each $k$ and each of which share an resource with the original player that
has value $1$ for both.
The original player needs to get one of these resources, which means
one of the auxiliary players needs to get a significant value from
the resources with $1/(\log)^k(2n) < v_{ij} \le 1/(\log)^{k+1}(2n)$.
This reduction loses a factor of at most $\log^*(2n)$.
Hence, $\OPT \geq 1/\log^*(2n)$.

\item[Reduction to 3 sizes.] We further reduce to an instance such that for each player $i$ there is some value $v_i$ such that for each resource $j$, $v_{ij}\in\{0, v_i, 1\}$. 

Let $i$ be some player who has only resources of value $v_{ij}\in\{0,1\}$ or
$1/(\log)^k(2n) < v_{ij} \le 1/(\log)^{k+1}(2n)$. 
There are at most $\log((\log)^k(2n)) \leq (\log)^{k+1}(2n)$ distinct values of
the latter kind. The idea is to assign bundles of resources of value $1/\left( \log^*(2n)(\log)^{k+1}(2n) \right)$ to the player $i$. For each distinct value, we create sufficiently many (say, $2n$)
auxiliary players. These auxiliary players
each share a new resource with $i$, which has value $1$ for this
player and value $1/\left( \log^*(2n)(\log)^{k+1}(2n) \right)$ for $i$.
If $i$ takes such an resource, the auxiliary player
should collect a value of $0.5/\left( \log^*(2n)(\log)^{k+1}(2n) \right)$ of resources of his
particular value. Hence, we set the values for these resources
for this player to $v_{ij} / \left(0.5/\left(\log^*(2n)(\log)^{k+1}(2n) \right) \right)$. We lose a factor of $O(\log^*(2n))$ in this step as there are at least $\lceil (0.5(\log)^{k+1}(2n))/\log^*(2n) \rceil$ bundles of size at least $0.5/\left( \log^*(2n)(\log)^{k+1}(2n) \right)$ for player $i$. Now rescale the instance appropriately to get $\OPT=1$. 

\item[Reduction to Combinatorial Santa Claus.]
For each player create a vertex in $P$ and for
each resource create a vertex in $\I$.
Moreover, for every player $i$, add $1/v_i$ vertices to $P$ and
the same number to $\I$. 
Add one hyperedge for each resource he values at $1$ (containing $i$ and
this resource).
Add another hyperedge for $i$ containing $i$ and all $1/v_i$ new vertices
in $\I$.
Pair the new vertices (one from $\I$ and one from $P$)
and add those pairs as hyperedge.
Finally, for each new vertex in $P$ and
each resource that $i$ values at $v_i$, add a hyperedge containing them.
This reduction does not lose any factor.

\end{description}

\paragraph*{Correctness.} 
Steps (1) and (2) are easy and we omit it. For the (3) step, notice that player $i$ might only be able to get a value of $O(1/(\log^*(2n))^2)$. Hence, we lose at most a factor of $O(\log^*(2n))$. It is not hard to see that one can reconstruct a solution to the instance produced by step ($2$) given a solution to instance produced by step ($3$) and vice-versa. 

For the (4) step, notice that if the player $i$ of value $v_i$ in the Santa Claus instance takes a resource of value 1, then the new vertices of $P$ and $R$ can form a matching and the vertex $i$ can take the same resource. On the other hand, if there exists a 1-relaxed matching in the hypergraph matching instance, we argue as follows. If the vertex $i$ takes an edge with single vertex, the player $i$ is given the corresponding resource of value 1. Otherwise, vertex $i$ must have taken the edge with $1/v_i$ new vertices corresponding to $i$ and consequently $1/v_i$ new vertices in $P$ must have taken edges with vertices corresponding to resources of value $v_i$ each. We simply assign the resources corresponding to these vertices to the player $i$. 

Hence, any $c$-approximate solution to the Combinatorial Santa Claus problem yields a $O((c\log^*(n))^2)$ solution to the Santa Claus problem. We lose a factor of $O(1)$ in step (1), $O(\log^*(n))$ factors each in steps (2) and (3) due to the reduction and factors of $O(c)$ each in steps (3) and (4) while reconstructing the solution.    
\end{proof}

\section{Conclusion}

We formulated a new matching problem in non-uniform hypergraphs that generalizes the case of uniform hypergraphs that has been already studied in the context of the restricted Santa Claus problem. Under the assumption that the hypergraph is regular and all edges are sufficiently large, we proved that there is always a $(1/\alpha)$-relaxed perfect matching for $\alpha = O(\log \log (n))$. This result generalizes the work of \citet{BansalSrividenko}.
It remains an intriguing question whether one can get $\alpha = O(1)$ as it is possible in the uniform case.
One idea (similar to Feige's proof in the uniform case~\cite{Feige}) is to to view our proof as a sparsification theorem
and to apply it several times.
Given a set of hyperedges such that every player has $\ell$ hyperedges and every resource appears in no more than $\ell$ hyperedges, one would like to select $\textrm{polylog}(\ell)$ hyperedges for each player such that all resources appear in no more than $\textrm{polylog}(\ell)$ of the selected hyperedges. It is not difficult to see than our proof actually achieves this when $\ell=\textrm{polylog}(n)$. However, repeating this after the first step seems to require new ideas since our bound on the number of times each resource is taken is $\Omega \left(\frac{d+\ell}{\ell}\log(\ell) \right)$ where $\ell$ is the current sparsity and $d$ the number of configuration sizes. For the first step, we conveniently have that $d=O(\log (n))=O(\ell)$ but after the first sparsification, it may not be true.

We also provided a reduction from Santa Claus to the Combinatorial Santa Claus.
An interesting result is to improve the $O(\log^*(n))^2$ factor in the reduction to a constant.

Finally, we gave a positive result for the Combinatorial Santa Claus problem when all hyperedges are large.
It is not clear whether this assumption is necessary. In other words, knowing that a perfect matching exists,
can one compute an $(1/\alpha)$-relaxed perfect matching (for some small $\alpha$)
even when edges are allowed to be very small (e.g. of size $2$)?
This is a particularly interesting question as it would have immediate consequences for the Santa Claus problem.
However, such an algorithm cannot be based on regularity, as this is not a sufficient condition for the existence of a
$(1/\alpha)$-relaxed matching when small hyperedges are allowed, as demonstrated by the example in the beginning of the paper.

\section{Acknowledgements}
The authors wish to thank Ola Svensson for helpful discussions on the problem.

\bibliographystyle{plainnat}
\bibliography{refs}

\appendix

\section{Concentration bounds}
\begin{proposition}[Chernoff bounds (see e.g.~\cite{mitzenmacher2017probability})]
\label{chernoff}
Let $X=\sum_i X_i$ be a sum of independent random variables such that each $X_i$ can take values in a range $[0,1]$. Define $\mu=\mathbb E(X)$. We then have the following bounds 

\begin{equation*}
    \mathbb P \left(X\geq (1+\delta)\mathbb E(X) \right) \leq \exp\left(-\frac{\min\{\delta,\delta^2\} \mu}{3} \right)
\end{equation*} for any $\delta>0$.
\begin{equation*}
    \mathbb P \left(X\leq (1-\delta)\mathbb E(X) \right) \leq \exp\left(-\frac{\delta^2 \mu}{2} \right)
\end{equation*} for any $0<\delta<1$.
\end{proposition}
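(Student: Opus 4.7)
The plan is to apply the standard exponential moment method (Chernoff's trick): handle the upper and lower tails separately, in each case applying Markov's inequality to a suitable exponential transform of $X$ and then optimizing the free parameter.

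For the upper tail, I would fix a parameter $t>0$ and write
\begin{equation*}
\mathbb{P}(X \geq (1+\delta)\mu)
= \mathbb{P}(e^{tX} \geq e^{t(1+\delta)\mu})
\leq e^{-t(1+\delta)\mu}\,\mathbb{E}[e^{tX}].
\end{equation*}
By independence $\mathbb{E}[e^{tX}] = \prod_i \mathbb{E}[e^{tX_i}]$, and since each $X_i$ lies in $[0,1]$, convexity of $z\mapsto e^{tz}$ on $[0,1]$ gives $e^{tX_i}\leq 1+(e^t-1)X_i$. Taking expectations and using $1+x\leq e^x$ yields $\mathbb{E}[e^{tX_i}]\leq \exp((e^t-1)\mathbb{E}[X_i])$, so $\mathbb{E}[e^{tX}]\leq \exp((e^t-1)\mu)$. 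Choosing $t=\ln(1+\delta)$ to optimize produces the classical bound
\begin{equation*}
\mathbb{P}(X \geq (1+\delta)\mu) \leq \left(\frac{e^\delta}{(1+\delta)^{1+\delta}}\right)^\mu .
\end{equation*}

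The remaining step is to massage this into the form $\exp(-\min(\delta,\delta^2)\mu/3)$ stated in the proposition. Taking logarithms, I need $\delta - (1+\delta)\ln(1+\delta) \leq -\min(\delta,\delta^2)/3$. For $0<\delta\leq 1$, a Taylor expansion of $\ln(1+\delta)$ around zero shows the left-hand side is at most $-\delta^2/3$; for $\delta>1$, a direct monotonicity estimate shows it is at most $-\delta/3$. These two elementary case checks, requiring a careful calculus argument to pin down the constant $1/3$, are essentially the only real work.

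For the lower tail the argument is symmetric: apply Markov to $e^{-tX}$ with $t>0$, use the convexity bound in the form $\mathbb{E}[e^{-tX_i}]\leq \exp((e^{-t}-1)\mathbb{E}[X_i])$, optimize at $t=-\ln(1-\delta)$, and obtain $\mathbb{P}(X\leq (1-\delta)\mu) \leq (e^{-\delta}/(1-\delta)^{1-\delta})^\mu$. Taking logarithms reduces the claim to the inequality $-\delta-(1-\delta)\ln(1-\delta) \leq -\delta^2/2$ on $\delta\in(0,1)$, which again follows from expanding $\ln(1-\delta)$. The main obstacle is thus not conceptual—this is a textbook result—but merely verifying that the two specific constants $1/3$ and $1/2$ fall out correctly from the case analysis, which is precisely why the authors simply cite \cite{mitzenmacher2017probability} rather than redo the computation.
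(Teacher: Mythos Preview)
Your proposal is correct and follows the standard textbook derivation; the paper itself does not give a proof of this proposition at all, simply citing \cite{mitzenmacher2017probability} as a reference. So there is nothing to compare against beyond noting that your sketch is exactly the kind of argument one finds in that source.
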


The following proposition follows immediately from Proposition \ref{chernoff} 
by apply it with $X'=X/a$.
\begin{proposition}
\label{cor:chernoff}
Let $X=\sum_i X_i$ be a sum of independent random variables such that each $X_i$ can take values in a range $[0,a]$ for some $a>0$. Define $\mu=\mathbb E(X)$. We then have the following bounds 

\begin{equation*}
    \mathbb P \left(X\geq (1+\delta)\mathbb E(X) \right) \leq \exp\left(-\frac{\min\{\delta,\delta^2\} \mu}{3a} \right)
\end{equation*} for any $\delta>0$.
\begin{equation*}
    \mathbb P \left(X\leq (1-\delta)\mathbb E(X) \right) \leq \exp\left(-\frac{\delta^2 \mu}{2a} \right)
\end{equation*} for any $0<\delta<1$.
\end{proposition}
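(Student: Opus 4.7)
The plan is to prove this as an immediate corollary of Proposition~\ref{chernoff} by rescaling, exactly as the remark preceding the statement suggests. I would define $X_i' = X_i / a$ for each $i$. Since $X_i \in [0,a]$, we get $X_i' \in [0,1]$, so the rescaled variables satisfy the hypothesis of Proposition~\ref{chernoff}. The new sum is $X' = \sum_i X_i' = X/a$, with mean $\mu' = \mathbb{E}[X'] = \mu/a$.

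The key observation is that scaling preserves the deviation events: $X \ge (1+\delta)\mu$ if and only if $X/a \ge (1+\delta)(\mu/a)$, i.e. $X' \ge (1+\delta)\mu'$. Applying the upper-tail bound of Proposition~\ref{chernoff} to $X'$ with the same parameter $\delta$ yields
\begin{equation*}
    \mathbb{P}\bigl(X \ge (1+\delta)\mu\bigr) = \mathbb{P}\bigl(X' \ge (1+\delta)\mu'\bigr) \le \exp\!\left(-\frac{\min\{\delta,\delta^2\}\,\mu'}{3}\right) = \exp\!\left(-\frac{\min\{\delta,\delta^2\}\,\mu}{3a}\right),
\end{equation*}
which is the claimed upper-tail bound. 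The same substitution gives the lower-tail bound: $X \le (1-\delta)\mu$ iff $X' \le (1-\delta)\mu'$, and the lower-tail version of Proposition~\ref{chernoff} then yields $\exp(-\delta^2 \mu'/2) = \exp(-\delta^2 \mu/(2a))$.

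There is essentially no obstacle here: the whole proof is a one-line rescaling argument, and the only thing to check is that the events and means transform correctly under division by the constant $a$, which they do because $a > 0$. I would present the proof in two or three lines, stating the substitution $X_i' = X_i/a$, noting $\mu' = \mu/a$, and invoking Proposition~\ref{chernoff} for each of the two bounds.
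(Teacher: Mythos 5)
Your proposal is correct and follows exactly the approach the paper itself indicates: the paper simply remarks that Proposition~\ref{cor:chernoff} follows from Proposition~\ref{chernoff} by applying it to $X' = X/a$, which is precisely the rescaling you carried out. Nothing is missing.
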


\section{Assuming $\ell = \textrm{poly(log} (n))$}\label{appendix_main}
When considering $\alpha$-relaxed perfect matchings
in regular hypergraphs with sufficiently large hyperedges, we can assume that $\ell =300.000\log^{3}(n)$ at a constant loss:

If $\ell$ is smaller than $300.000\log^{3}(n)$, then we simply duplicate all hyperedges an appropriate number of times.
If $\ell$ is larger, we select for each player $300.000\log^{3}(n)$ configurations uniformly at random from his configurations. The expected number of times a resource appears in a configuration with this process is at most $300.000\log^{3}(n)$. Hence, the probability that a resource appears more than $600.000\log^{3}(n)$ times is at most $\exp \left(- 1/3 \cdot 100.000\log^{3}(n)\right)\leq 1/n^{10}$ by a standard Chernoff bound (see Proposition~\ref{cor:chernoff}). Hence with high probability this event does not happen for any resource.
We now have that each player has $300.000\log^{3}(n)$ configurations and each resource does not appear in more than $600.000\log^{3}(n)$ configurations.
Taking for each configuration $C$ only $\lfloor |C| / 2 \rfloor$ resources we can reduce the
latter bound to $300.000\log^3(n)$ as well:
The previous argument gives a half-integral matching of resources to configurations satisfying
the mentioned guarantee. Then by integrality
of the bipartite matching polytope there
is also an integral one.

\section{Omitted proofs from Section~\ref{sec:sequence}}\label{appendix_sequence}
\begin{customthm}{\ref{lma-size}}(restated)
Consider Random Experiment~\ref{exp:sequence}
with $\ell\geq 300.000\log^{3} (n)$.
For any $k\geq 0$ and any $C\in\C^{(\geq k)}$ we have 
  \begin{equation*}
      \frac{1}{2} \ell^{-k}|C| \le |\I_k \cap C| \le \frac{3}{2} \ell^{-k}|C|
  \end{equation*}
with probability at least $1-1/n^{10}$.
\end{customthm}
\begin{proof}
 The lemma trivially holds for $k=0$. 
 For $k>0$, by assumption $C\in\C^{(\geq k)}$ hence $|C|\geq \ell^{k+3}$. Since each resource of $\I=\I_0$ survives in $\I_k$ with probability $\ell^{-k}$ we clearly have that in expectation
 \begin{equation*}
     \mathbb E(|\I_k\cap C|) =  \ell^{-k}|C|
 \end{equation*}
 Hence the random variable $X=|\I_k\cap C|$ is a sum of independent variables of value either $0$ or $1$ and such that $\mathbb E (X)\geq \ell^3$. By a standard Chernoff bound (see Proposition \ref{cor:chernoff}), we get
 \begin{equation*}
     \mathbb P\left(X\notin \left[\frac{\mathbb{E}(X)}{2}, \frac{3\mathbb{E}(X)}{2}\right]\right) \leq 2 \exp \left(-\frac{\mathbb E(X)}{12} \right) \leq 2 \exp \left(-\frac{300.000\log^3 (n)}{12} \right) \leq  \frac{1}{n^{10}}
 \end{equation*}
 since by assumption $\ell \geq 300.000\log^3 (n)$.
\end{proof}

\begin{customthm}{\ref{lma-overlap-representative}}(restated)
Consider Random Experiment~\ref{exp:sequence}
with $\ell\geq 300.000\log^{3} (n)$.
For any $k\geq 0$ and any $C\in\C^{(\geq k)}$ we have 
  \begin{equation*}
      \sum_{C'\in \C^{(k)}} |C'\cap C\cap \I_k| \leq \frac{10}{\ell^{k}} \left(|C|+\sum_{C'\in \C^{(k)}} |C'\cap C| \right)
  \end{equation*}
with probability at least $1-1/n^{10}$.
\end{customthm}
\begin{proof} The expected value of the random variable $X=\sum_{C'\in \C^{(k)}} |C'\cap C\cap \I_k|$ is 
\begin{equation*}
    \mathbb E(X) = \frac{1}{\ell^k} \sum_{C'\in \C^{(k)}} |C'\cap C|.
\end{equation*}
Since each resource is in at most $\ell$ configurations, $X$ is a sum of independent random variables that take value in a range $[0,\ell]$. Then by a standard Chernoff bound (see Proposition \ref{cor:chernoff}), we get

\begin{equation*}
    \mathbb P\left(X\ge  10  \left(\frac{|C|}{\ell^k} + \mathbb E(X)\right) \right) \leq \exp\left(-\frac{3|C|}{\ell^{k+1}}\right) \leq \frac{1}{n^{10}} ,
\end{equation*}
since by assumption, $|C|\geq \ell^{k+3}$ and $\ell \geq 300.000\log ^3(n)$.

\end{proof}

We finish by the proof of the last property. As mentioned in the main body of the paper, this statement is a generalization of some ideas that already appeared in \cite{BansalSrividenko}. However, in \cite{BansalSrividenko}, the situation is simpler since they need to sample down the resource set only once (i.e. there are only two sets $R_1\subseteq R$ and not a full hierarchy of resource sets $R_d\subseteq R_{d-1}\subseteq \cdots \subseteq R_1 \subseteq R$). Given the resource set $R_1$, they want to select configurations and give to each selected configuration $K$ all of its resource set $|K\cap R_1|$ so that no resource is assigned too many times. In our case the situation is also slightly more complex than that since at every step the selected configurations receive only a fraction of their current resource set. Nevertheless, we extend the ideas of \citet{BansalSrividenko} to our more general setting. We recall the main statement before proceeding to its proof.
\begin{customthm}{\ref{lma-good-solution}}(restated)
Consider Random Experiment~\ref{exp:sequence}
with $\ell\geq 300.000\log^{3} (n)$.
Assume that the bounds in Lemma~\ref{lma-size} hold
for some $k\ge 0$.
Then with probability at least $1 - 1/n^{10}$
the following holds for all
$\F\subseteq \C^{(\geq k+1)}$, $\alpha:\F \rightarrow \mathbb N$, and $\gamma \in\mathbb N$ such that $\ell^3/1000\leq \alpha(C) \leq n $ for all $C\in\F$ and
$\gamma \in \{1,\dotsc,\ell\}$:
If there is a $(\alpha,\gamma)$-good assignment of $\I_{k+1}$ to $\F$, then there is a $(\alpha',\gamma)$-good assignment of $\I_k$ to $\F$ where
\begin{equation}\label{property:3}
    \alpha'(C) \ge \ell \left(1-\frac{1}{\log (n)} \right) \alpha(C)
\end{equation}
for all $C\in\F$.
Moreover, this assignment can be found in polynomial time.
\end{customthm}

We first provide the definitions of a flow network that allows us to state a clean condition whether a good assignment of resources exists or not. We then provide the high probability statements that imply the lemma.

For any subset of configurations $\mathcal F \subseteq \C^{(\geq k+1)}$, resource set $\I_k$, $\alpha:\F \rightarrow \mathbb N$, and any integer $\gamma$, consider the following directed network (denoted by $\mathcal N (\mathcal F, \I_k, \alpha,\gamma)$). Create a vertex for each configuration in $\mathcal F$ as well as a vertex for each resource. Add a source $s$ and sink $t$. Then add a directed arc from $s$ to the vertex 
$C\in\mathcal F$ with capacity $\alpha(C)$. For every pair of a configuration $C$ and a resource $i$ such that $i\in C$ add a directed arc from $C$ to $i$ with capacity $1$. Finally, add a directed arc from every resource to the sink of capacity $\gamma$. See Figure \ref{fig:network_flow} for an illustration. 

\begin{figure}[]
    \centering
    \includegraphics[scale=1.15]{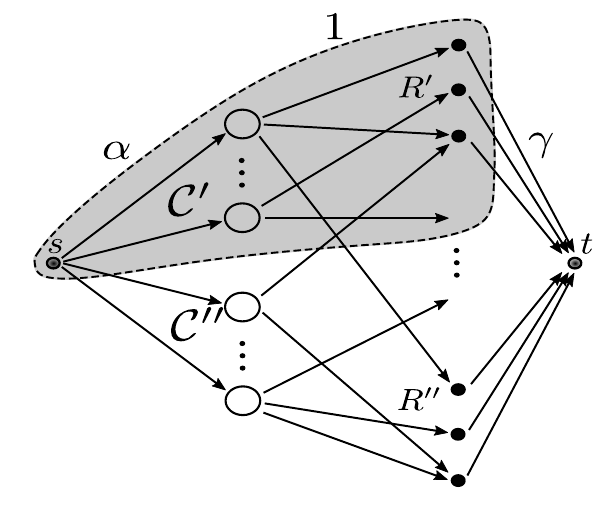}
    \caption{The directed network and an $s$-$t$ cut}
    \label{fig:network_flow}
\end{figure}

We denote by 
\begin{equation*}
    \textrm{maxflow}\left(\mathcal N (\mathcal F, \I_k, \alpha,\gamma)\right)
\end{equation*}
the value of the maximum $s$-$t$ flow in $\mathcal N (\mathcal F, \I_k, \alpha,\gamma)$.

\begin{lemma}\label{lem:flow_black_box}
Let $\mathcal F$ be a set of configurations,
$\I' \subseteq \I$, $\alpha:\F \rightarrow \mathbb N$ a set of resources, $\gamma\in\mathbb N$, and $\epsilon \ge 0$.
Define
\begin{equation*}
    \alpha'(C) = \lfloor (1 - \epsilon) \alpha(C) \rfloor .
\end{equation*}
There is an $(\alpha', \gamma)$-good assignment of $\I'$ to $\F$ if and only if for every $\F' \subseteq \F$, the maximum flow in the network $\mathcal N (\mathcal F', \I', \alpha,\gamma)$ is of value at least $\sum_{C\in \F'}\alpha'(C)$. Moreover, this assignment can be found in polynomial time.
\end{lemma}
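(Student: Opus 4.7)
The plan is to prove both directions via the max-flow min-cut theorem, and to extract the algorithmic statement from the integrality of flows with integer capacities. The forward direction is essentially by inspection: given an $(\alpha',\gamma)$-good assignment $\phi$ of $\I'$ to $\F$ and any subfamily $\F'\subseteq \F$, for each $C\in \F'$ I would retain exactly $\alpha'(C)$ of the resources assigned to it and translate this restricted assignment into a flow in $\mathcal N(\F',\I',\alpha,\gamma)$ by sending $\alpha'(C)\le \alpha(C)$ units through the $s$-to-$C$ arc, one unit along each $C$-to-$r$ arc corresponding to a retained pair, and at most $\gamma$ units through each $r$-to-$t$ arc (since each resource is used at most $\gamma$ times in $\phi$). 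This gives a feasible flow of value $\sum_{C\in\F'}\alpha'(C)$.

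For the converse I would introduce the auxiliary network $\mathcal N'$ obtained from $\mathcal N(\F,\I',\alpha,\gamma)$ by lowering the capacity of every $s$-to-$C$ arc from $\alpha(C)$ to $\alpha'(C)$, and prove that the maximum flow in $\mathcal N'$ equals $\sum_{C\in \F}\alpha'(C)$, i.e.\ saturates every $s$-arc. By integrality of max flow with integer capacities, this yields an integral saturating flow that directly encodes an $(\alpha',\gamma)$-good assignment of $\I'$ to $\F$. Suppose for contradiction that the max flow in $\mathcal N'$ falls strictly short. By max-flow min-cut there is an $s$-$t$ cut of smaller value; let $B$ be the configurations on the source side and $T$ the resources on the source side. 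The cut value is
\[
\sum_{C\in\F\setminus B}\alpha'(C) \;+\; \gamma|T| \;+\; \sum_{C\in B} |C \cap (\I'\setminus T)|,
\]
so the strict inequality simplifies to
\[
\gamma|T| \;+\; \sum_{C\in B} |C \cap (\I'\setminus T)| \;<\; \sum_{C\in B}\alpha'(C).
\]

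Now the left-hand side is exactly the capacity of a valid $s$-$t$ cut in $\mathcal N(B,\I',\alpha,\gamma)$ with all of $B$ on the source side together with $T$ (no $s$-arc is cut, so the capacities $\alpha(C)$ play no role), so the max flow in $\mathcal N(B,\I',\alpha,\gamma)$ is strictly below $\sum_{C\in B}\alpha'(C)$. This contradicts the hypothesis applied to $\F'=B$. For the algorithmic claim I would simply run any polynomial integral max-flow algorithm on $\mathcal N'$ and decode the saturating flow into the assignment. The main subtlety is reconciling the mismatch between the capacities $\alpha(C)$ used in the hypothesis and the target thresholds $\alpha'(C)$ enforced in $\mathcal N'$: the reason this works is that the witnessing min-cut in the contradiction cuts no $s$-arc on the $B$-side, so the original capacities $\alpha(C)$ never enter the accounting and the same numerical bound transfers between the two networks.
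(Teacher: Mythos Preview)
Your proposal is correct and follows essentially the same approach as the paper: both directions use max-flow min-cut, and for the converse both arguments pass to the network with source capacities lowered to $\alpha'(C)$, take a min cut that falls short, isolate the configurations $B$ on the source side, and observe that since no $s$-arc over $B$ is cut the same cut witnesses a too-small max flow in $\mathcal N(B,\I',\alpha,\gamma)$, contradicting the hypothesis for $\F'=B$. Your write-up is in fact slightly more explicit than the paper's (you spell out the cut value and the simplification), but the structure and the key observation are identical.
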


\begin{proof}
First assume there is such an $(\alpha', \gamma)$-good assignment. Then send a flow of $\alpha'(C)$ from $s$ to each $C\in \F$. If resource $i$ is assigned to $C$, send a flow of $1$ from $C$ to $i$. Finally ensure that flow is preserved at every vertex corresponding to an resource by sending the correct amount of flow to $t$. Since no resource is taken more than $\gamma$ times, this flow is feasible. 

We prove the other direction by contradiction. Denote by $\N$ the network $\N(\F,\I',\alpha',\gamma)$.
If there is no good assignment satisfying the condition of the lemma then the maximum flow in $\N$ must be strictly less than $\sum_{C\in \F}\alpha'(C)$ (otherwise consider the maximum flow, which can be taken to be integral, and give to every configuration $C$ all the resources to which they send a flow of $1$). Then by the max-flow min-cut theorem, there exists an $s$-$t$ cut $S$ that has value strictly less than $\sum_{C\in \F}\alpha'(C)$. Let $\mathcal C'$ be the set of configurations on the side of the source in $S$. Notice that $\mathcal C'$ cannot be empty by assumption on the value of the cut.

Consider the induced network $\mathcal N (\mathcal C', \I', \alpha',\gamma)$ and the cut $S$ in it. It has a value strictly lower than $\sum_{C\in \mathcal C'} \alpha'(C)$. This, in turn implies that the cut $S$ in $\mathcal N (\mathcal C', \I', \alpha,\gamma)$ has a value strictly lower than $\sum_{C\in \mathcal C'} \alpha'(C)$, since this cut does not contain any edge from the source $s$ to some configuration. Hence the maximum flow in $\mathcal N (\mathcal C', \I', \alpha,\gamma)$ has a value strictly less than $\sum_{C\in \mathcal C'} \alpha'(C)$,
a contradiction to the assumption in the premise. 
\end{proof}

\begin{lemma}
\label{lem:flow_conservation}
Let $\mathcal F\subseteq \mathcal C^{\geq (k+1)}$, $\alpha:\F \rightarrow \mathbb N$ such that $\ell^3/1000 \leq \alpha(C) \leq n$ for all $C\in\F$, and $1 \le \gamma \le \ell$. Denote by $\mathcal N$ the network $\mathcal N (\mathcal F, \I_k, \ell \cdot \alpha,\gamma)$ and by $\Tilde{\mathcal{N}}$ the network $\mathcal N (\mathcal F, \I_{k+1},\alpha,\gamma)$. Then
 \begin{equation*}
     \mathrm{maxflow}\left(\mathcal{N}\right)\geq \frac{\ell}{1+0.5/\log (n)}  \mathrm{maxflow}\left(\Tilde{\mathcal{N}}\right)
 \end{equation*}
 with probability at least $1-1/(n\ell)^{20|\mathcal F|}$.
\end{lemma}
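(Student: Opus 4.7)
The plan is to pass to cuts via max-flow min-cut duality and establish the desired bound partition-by-partition using a Chernoff inequality, followed by a union bound over all configuration partitions of $\F$. By duality, the claim is equivalent to $\mathrm{mincut}(\widetilde{\mathcal{N}}) \leq \frac{1+0.5/\log(n)}{\ell}\,\mathrm{mincut}(\mathcal{N})$. It suffices to exhibit one cut in $\widetilde{\mathcal{N}}$ of the required value: we shall use the projection of the minimum cut of $\mathcal{N}$. Any $s$--$t$ cut in $\mathcal{N}$ is characterized (once resources are placed optimally) by a partition $\F = \F_A \sqcup \F_B$, since it is optimal to place $r$ on the source side iff $d_{\F_A}(r) := |\{C \in \F_A : r \in C\}| > \gamma$. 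The cut value is then $V(\F_A) := \sum_{C \in \F_B}\ell\alpha(C) + W(\F_A)$ where $W(\F_A) := \sum_{r \in \I_k}\min(\gamma, d_{\F_A}(r))$. The analogous projected cut in $\widetilde{\mathcal{N}}$ has value $\widetilde V(\F_A) := \sum_{C \in \F_B}\alpha(C) + \widetilde W(\F_A)$ with $\widetilde W(\F_A) := \sum_{r \in \I_{k+1}}\min(\gamma, d_{\F_A}(r))$.

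The source-side contribution scales exactly by $1/\ell$. For the resource-side contribution, each $r \in \I_k$ is independently retained in $\I_{k+1}$ with probability $1/\ell$, so $\widetilde W(\F_A)$ is a sum of independent random variables in $[0,\gamma]$ with $\E[\widetilde W(\F_A)] = W(\F_A)/\ell$. Writing $M_1 := \sum_{C \in \F_B}\ell\alpha(C)$ and $\epsilon := 0.5/\log(n)$, I apply Proposition~\ref{cor:chernoff} with multiplicative deviation $\delta = \epsilon(W+M_1)/W$, so that $(1+\delta)W/\ell$ coincides with the desired threshold $\widetilde V(\F_A) = \tfrac{1+\epsilon}{\ell}V(\F_A)$. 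A short case analysis on whether $\delta \geq 1$ or $\delta < 1$ then yields
\begin{equation*}
    \P\bigl[\widetilde V(\F_A) > \tfrac{1+\epsilon}{\ell}V(\F_A)\bigr] \le \exp\!\left(-\Omega\!\left(\frac{\epsilon^{2}(W(\F_A)+M_1)}{\ell\gamma}\right)\right).
\end{equation*}

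The key estimate is a uniform lower bound $W(\F_A) + M_1 \geq |\F|\ell^{3}/4$ valid for every partition. On one hand, the hypothesis $\alpha(C) \geq \ell^3/1000$ gives $M_1 \geq |\F_B|\ell^4/1000 \geq |\F_B|\ell^3/4$ (using $\ell \geq 250$). On the other, since each resource lies in at most $\ell$ configurations, $d_{\F_A}(r) \leq \ell$; combined with $\gamma \leq \ell$ and the elementary bound $\min(\gamma,d) \geq \gamma d/(\gamma+d) \geq \gamma d/(2\ell)$, this yields $W(\F_A) \geq (\gamma/(2\ell))\sum_{C \in \F_A}|C \cap \I_k| \geq \gamma|\F_A|\ell^3/4 \geq |\F_A|\ell^3/4$ after invoking the size bound in Lemma~\ref{lma-size}. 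Summing, $W + M_1 \geq |\F|\ell^{3}/4$. Plugged into the Chernoff exponent with $\ell \geq 300\,000\log^3(n)$ and $\gamma \leq \ell$, this yields an exponent of order $|\F|\ell/\log^2(n)$, which comfortably surpasses $21|\F|\log(n\ell)$.

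A union bound over the $2^{|\F|}$ partitions of $\F$ then controls the total failure probability by $\exp(|\F|\ln 2)\cdot \exp(-\Omega(|\F|\log(n\ell))) \leq 1/(n\ell)^{20|\F|}$, as claimed. In the surviving event, projecting the min-cut partition $(\F_A^\star,\F_B^\star)$ of $\mathcal{N}$ witnesses $\mathrm{mincut}(\widetilde{\mathcal{N}}) \leq \widetilde V(\F_A^\star) \leq \tfrac{1+\epsilon}{\ell}\,\mathrm{mincut}(\mathcal{N})$, giving the claim. The main difficulty is securing uniform concentration across all partitions; the estimate $W(\F_A)+M_1 \geq |\F|\ell^{3}/4$ is precisely what rules out the potentially degenerate regime in which neither $\F_A$ nor $\F_B$ is large enough to drive the Chernoff exponent past the union-bound threshold.
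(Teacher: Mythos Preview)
Your proof is correct and follows essentially the same route as the paper: max-flow min-cut duality, projecting a minimum cut of $\mathcal{N}$ to $\widetilde{\mathcal{N}}$, and a Chernoff bound whose exponent is driven by a lower bound of order $|\F|\ell^3$ on the cut value (the paper proves $|\F|\ell^3/1000$, you prove $|\F|\ell^3/4$). The one difference is that you union-bound over all $2^{|\F|}$ configuration partitions, whereas the paper only analyzes the single min-cut of $\mathcal{N}$---which is determined by $\I_k$ alone and hence fixed before $\I_{k+1}$ is sampled---so your union bound is harmless but unnecessary.
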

\begin{proof}
We use the max-flow min-cut theorem that asserts that the value of the maximum flow in a network is equal to the value of the minimum $s$-$t$ cut in the network. Consider a minimum cut $S$ of network $\mathcal N$ with $s\in S$ and $t\notin S$. Denote by $c(S)$ the value of the cut. We will argue that with high probability this cut induces a cut of value at most $c(S) / \ell \cdot (1+0.5/\log(n))$
in the network $\Tilde{\mathcal N}$.
This directly implies the lemma.

Denote by $\mathcal C'$ the set of configurations of $\F$ that are in $S$, i.e., on the source side of the cut, and $\mathcal C''=\mathcal{F}\setminus \mathcal C'$.
Similarly consider $\I'$ the set of resources in the $s$ side of the cut and $\I''= \I_k\setminus \I'$. With a similar notation, we denote $\Tilde \I' = \I'\cap \I_{k+1}$ the set of resources of $\I'$ surviving in $\I_{k+1}$; and $\Tilde \I'' = \I''\cap \I_{k+1}$. Finally, denote by $\Tilde S$ the cut in $\Tilde{\mathcal N}$ obtained by removing resources of $R'$ that do not survive in $\I_{k+1}$ from $S$, i.e.,
$\Tilde S = \{s\}\cup \mathcal C' \cup \I'$.
The value of the cut $S$ of $\mathcal N$ is
\begin{equation*}
    c(S) = \sum_{C\in \mathcal C''} \ell \cdot \alpha(C) + e(\mathcal C',\I'')+ \gamma  |\I'|
\end{equation*}
where $e(X,Y)$ denotes the number of edges from $X$ to $Y$.
The value of the cut $\Tilde S$ in $\Tilde{\mathcal N}$ is
\begin{equation*}
    c( \Tilde S) = \sum_{C\in \mathcal C''} \alpha(C) + e(\mathcal C',\Tilde \I'')+ \gamma  |\Tilde \I'|
\end{equation*}
We claim the following properties.
\begin{claim}
\label{cla:size_configurations}
For every $C\in \mathcal F$, the outdegree of the vertex corresponding to $C$ in $\mathcal N$ is at least $\ell^4/2$.
\end{claim}
Since $C\in \C^{(\geq k+1)}$ and by Lemma \ref{lma-size}, we clearly have that $|C\cap \I_k|\geq \ell^4/2$.
\begin{claim}
\label{cla:size_cut}
It holds that
\begin{equation*}
 c(S)\geq \frac{|\F| \ell^3}{1000} .
\end{equation*}
\end{claim}
We have by assumption on $\alpha(C)$
\begin{multline*}
    c(S) = \sum_{C\in \mathcal C''} \ell \cdot \alpha(C) + e(\mathcal C',\I'')+ \gamma |\I'|
    \geq \sum_{C\in \mathcal C''} \frac{\ell^3}{1000} + e(\mathcal C',\I'')+ \gamma |\I'|\\
    \geq \frac{|\mathcal C''|\ell^3}{1000} + e(\mathcal C',\I'')+ \gamma |\I'|
\end{multline*}
Now consider the case where $e(\mathcal C',\I'')\leq |\mathcal C'|\ell^3 / 1000$.
Since each vertex in $\mathcal C'$ has outdegree at least $\ell^4/2$ in the network $\mathcal N$ (by Claim~\ref{cla:size_configurations}) it must be that $e(\mathcal C',\I')\geq |\mathcal C'|\ell^4 / 2 - |\mathcal C'|\ell^3 / 1000 > |\mathcal C'|\ell^4 / 3$.
Using that each vertex in $\I'$ has indegree at most $\ell$ (each resource is in at most $\ell$ configurations), this implies
$|\I'|\geq |\mathcal C'|\ell^3 / 3$. Since $\gamma \geq 1$ we have in all cases that $e(\mathcal C',\I'')+ \gamma  |\I'|\geq |\mathcal C'|\ell^3 / 1000$. Hence 
\begin{equation*}
    c(S) \geq \frac{|\mathcal C''|\ell^3}{1000} + \frac{|\mathcal C'|\ell^3}{1000} =  \frac{|\F| \ell^3}{1000} .
\end{equation*}
This proves Claim~\ref{cla:size_cut}.
We can now finish the proof of the lemma.
Denote by $X$ the value of the random variable $e(\mathcal C',\Tilde{\I''})+ \gamma  |\Tilde{\I'}|$. We have that 
\begin{equation*}
    \mathbb E[X] = \frac{1}{\ell}(e(\mathcal C',\I'')+ \gamma  |\I'|).
\end{equation*}
Moreover, $X$ can be written as a sum of independent variables in the range $[0, \ell]$ since each vertex is in at most $\ell$ configurations and $\gamma \le \ell$ by assumption. By a Chernoff bound (see Proposition \ref{cor:chernoff}) with
\begin{equation*}
    \delta = \frac{0.5 c(S)}{\log(n) \cdot (c(S)-\sum_{C\in \mathcal C''} \alpha(C))} \geq \frac{0.5}{\log(n)}
\end{equation*} 
we have that
\begin{multline*}
    \mathbb P\left(X\geq \mathbb E(X)+\frac{0.5 c(S)}{\ell\log(n)}\right) 
    \leq \exp\left(-\frac{\min\{\delta,\delta^2\}\mathbb E(X)}{3\ell} \right) \\
    \leq \exp\left(-\frac{c(S)}{12\ell^2\log^2 (n)} \right)
    \leq \exp\left(-\frac{|\mathcal F|\ell^3}{12.000\ell^2\log^2 (n)} \right)
    \leq \frac{1}{(n\ell)^{20|\F|}} ,
\end{multline*}
where the third inequality comes from Claim~\ref{cla:size_cut} and the last one from the assumption that $\ell\geq 300.000\log^{3}(n)$.
Hence with probability at least $1-1/(n\ell)^{20|\F|}$, we have that 
\begin{equation*}
    c( \Tilde S) = \sum_{C\in \mathcal C''} \alpha(C) + e(\mathcal C',\Tilde \I'')+ \gamma |\Tilde \I'| \leq \frac{1}{\ell}c(S)+\frac{0.5}{\ell \log (n)}c(S) .\qedhere
\end{equation*}
\end{proof}
We are now ready to prove Lemma~\ref{lma-good-solution}.
Note that Lemma \ref{lem:flow_conservation} holds with probability at least $1-1/(n\ell)^{20|\F|}$.
Given the resource set $\I_k$ and a cardinality $s = |\F|$ there are $O((n\ell)^{2s})$ ways of defining a network satisfying the conditions from Lemma~\ref{lem:flow_conservation} ($(m\ell)^s\le (n\ell)^s$ choices of $\F$, $n^{s}$ choices for $\alpha$ and $\ell$ choices for $\gamma$). By a union bound, we can assume that the properties of Lemma~\ref{lem:flow_conservation} hold for every possible network with probability at least $1 - 1/n^{10}$.
Assume now there is a $(\alpha,\gamma)$-good assignment of $\I_{k+1}$ to some family $\F$. Then by Lemma~\ref{lem:flow_black_box} the $\mathrm{maxflow}(\N(\F',\I_{k+1}, \alpha,\gamma))$ is exactly $\sum_{C\in \F'}\alpha(C)$ for any $\F'\subseteq \F$. By Lemma~\ref{lem:flow_conservation}, this implies that $\mathrm{maxflow}(\N(\F',\I_{k}, \ell \cdot \alpha,\gamma))$ is at least $\ell/(1+0.5/\log(n)) \sum_{C\in \F'}\alpha(C)$. By Lemma \ref{lem:flow_black_box}, this implies a $(\alpha',\gamma)$-good assignment from $\I_k$ to $\F$, where
\begin{equation*}
    \alpha'(C) = \lfloor\ell/(1+0.5/\log(n))\rfloor \alpha(C) \ge \ell / (1 + 1/\log(n)) \alpha(C) \geq \ell(1 - 1/\log(n)) \alpha(C).
\end{equation*}

\section{Omitted proofs from Section~\ref{sec:reconstruction}}\label{appendix_reconstruct}
\begin{customcla}{\ref{cla:reconstruct}}(restated)
For any $k\geq 0$, any $0\leq j\leq k$, and any $C\in \K^{(k)}$
\begin{equation*}
    \sum_{j\leq h\leq k}\sum_{K\in \K^{(h)}} \ell^{h}|K\cap C \cap \I_h| \leq 2000\frac{d+\ell}{\ell}\log (\ell) |C|.
\end{equation*}
\end{customcla}

\begin{proof}[Proof of Claim \ref{cla:reconstruct}]
By Lemma~\ref{lma:main-LLL} we have that 
\begin{equation*}
    \sum_{j\leq h\leq k}\sum_{K\in \K^{(h)}} \ell^{h}|K\cap C \cap \I_h| \leq \frac{1}{\ell} \sum_{j\leq h\leq k}\sum_{C'\in \C^{(h)}} \ell^{h}|C'\cap C \cap \I_h| + 1000\frac{d+\ell}{\ell}\log (\ell) |C|.
\end{equation*}
Furthermore, by Lemma \ref{lma-overlap-representative}, we get 
\begin{equation*}
    \sum_{C'\in \C^{(h)}} \ell^{h}|C'\cap C \cap \I_h| \leq \ell^{h}\frac{10}{\ell^h}\left(|C|+\sum_{C'\in \C^{(h)}} |C'\cap C| \right).
\end{equation*}
Finally note that each resource appears in at most $\ell$ configurations, hence
\begin{equation*}
    \sum_{j\leq h\leq k}\sum_{C'\in \C^{(h)}} |C'\cap C| \leq \ell |C|.
\end{equation*}
Putting everything together we conclude
\begin{align*}
    \sum_{j\leq h\leq k}\sum_{K\in \K^{(h)}} \ell^{h}|K\cap C \cap \I_h| &\leq \frac{1}{\ell} \sum_{j\leq h\leq k}\sum_{C'\in \C^{(h)}} \ell^{h}|C'\cap C \cap \I_h| + 1000\frac{d+\ell}{\ell}\log (\ell) |C| \\
    &\leq \frac{1}{\ell} \sum_{j\leq h\leq k}10\left( |C|+\sum_{C'\in \C^{(h)}}|C'\cap C|\right) + 1000\frac{d+\ell}{\ell}\log (\ell) |C|\\
    &\leq \frac{k-j}{\ell}10|C|+10|C|+1000\frac{d+\ell}{\ell}\log (\ell) |C|\\
    &\leq 20|C|+1000\frac{d+\ell}{\ell}\log (\ell) |C|\\
    &\leq 2000\frac{d+\ell}{\ell}\log (\ell) |C|.\qedhere
\end{align*}
\end{proof}

\begin{customcla}{\ref{cla:reconstruct_mu}}(restated) For any $C\in \K^{(\geq j)}$,
\begin{equation*}
    \frac{1}{\gamma^2}\sum_{K\in \K^{(j-1)}}|K\cap C \cap \I_{j-1}\cap O|\leq \mu \leq \frac{2}{\gamma} \sum_{K\in \K^{(j-1)}}|K\cap C \cap \I_{j-1}\cap O|.
\end{equation*}
\end{customcla}
\begin{proof}[Proof of Claim \ref{cla:reconstruct_mu}]
Note that we can write 
\begin{equation*}
    \mu = \sum_{i\in O\cap C} \frac{a_i+b_i-\gamma}{b_i} \leq \max_{i\in O\cap C}\left\lbrace \frac{a_i+b_i-\gamma}{a_ib_i} \right\rbrace \sum_{K\in \K^{(j-1)}}|K\cap C \cap \I_{j-1}\cap O|.
\end{equation*}
The reason for this is that each resource $i$ accounts for an expected loss of $(a_i+b_i-\gamma)/b_i$ while it is counted $a_i$ times in the sum 
\begin{equation*}
    \sum_{K\in \K^{(j-1)}}|K\cap C \cap \I_{j-1}\cap O|.
\end{equation*}
Similarly,
\begin{equation*}
    \mu = \sum_{i\in O\cap C} \frac{a_i+b_i-\gamma}{b_i} \geq \min_{i\in O\cap C}\left\lbrace \frac{a_i+b_i-\gamma}{a_ib_i} \right\rbrace \sum_{K\in \K^{(j-1)}}|K\cap C \cap \I_{j-1}\cap O|.
\end{equation*}
Note that by assumption we have that $a_i+b_i>\gamma$. This implies that either $a_i$ or $b_i$ is greater than $\gamma/2$. Assume w.l.o.g. that $a_i\geq \gamma/2$. Since by assumption $a_i\leq \gamma$ we have that 
\begin{equation*}
    \frac{a_i+b_i-\gamma}{a_ib_i}\leq \frac{b_i}{a_ib_i} =\frac{1}{a_i} \leq \frac{2}{\gamma}.
\end{equation*}
In the same manner, since $a_i+b_i>\gamma$ and that $a_i,b_i\leq \gamma$, we can write
\begin{equation*}
    \frac{a_i+b_i-\gamma}{a_ib_i}\geq \frac{1}{a_ib_i} \geq \frac{1}{\gamma^2}.
\end{equation*}
We therefore get the following bounds
\begin{equation*}
    \frac{1}{\gamma^2}\sum_{K\in \K^{(j-1)}}|K\cap C \cap \I_{j-1}\cap O|\leq \mu \leq \frac{2}{\gamma} \sum_{K\in \K^{(j-1)}}|K\cap C \cap \I_{j-1}\cap O|,
\end{equation*}
which is what we wanted to prove.
\end{proof}

\end{document}